\newif\ifLINUXBUILD
\newtheorem{myAttack}{Attack}
\renewcommand\@biblabel[1]{#1.}
\renewcommand\orcidID[1]{}
\author{Ignacio~M.~Delgado-Lozano\inst{1}\orcidID{0000-0002-0003-3318} \and
Macarena~C.~Mart\'inez-Rodr\'iguez\inst{2}\orcidID{0000-0003-3025-5736} \and
Alexandros~Bakas\inst{1}\orcidID{0000-0002-0731-1851} \and
Billy~Bob~Brumley\inst{1}\orcidID{0000-0001-9160-0463} \and
Antonis~Michalas\inst{1}\orcidID{0000-0002-0189-3520}}
\institute{Tampere University, Tampere, Finland\\
\email{\{ignacio.delgadolozano,alexandros.bakas\}@tuni.fi} \and
Instituto de Microelectr\'onica de Sevilla,\\CSIC/Universidad de Sevilla, Sevilla, Spain\\
\email{macarena@imse-cnm.csic.es}}
\authorrunning{Ignacio~M.~Delgado-Lozano et al.}
\newcommand{\Paragraph}[1]{\medbreak\noindent\textbf{#1.}}
\newcommand{\KEYWORDS}{%
attestation \and
remote power analysis \and
side channels \and
ADC \and
secure protocols \and
secure communications
}
\title{Attestation~Waves: Platform~Trust~via~Remote~Power~Analysis}
\begin{document}

\maketitle

\begin{abstract}

Attestation is a strong tool to verify the integrity of an untrusted system.
However, in recent years, different
attacks have appeared that are able to mislead the attestation process
with treacherous practices as memory copy, proxy, and rootkit attacks, just to name a few.
A successful attack leads to systems that are considered trusted
by a verifier system, while the prover has bypassed the
challenge. To mitigate these attacks against attestation
methods and protocols, some proposals have considered the use
of side-channel information that can be measured externally,
as it is the case of electromagnetic (EM) emanation. Nonetheless,
these methods require the physical proximity of an external setup
to capture the EM radiation.

In this paper, we present the possibility
of performing attestation by using the side-channel information
captured by a sensor or peripheral that lives in the same
System-on-Chip (SoC) than the processor system (PS) which
executes the operation that we aim to attest, by only sharing
the Power Distribution Network (PDN). In our case, an
analog-to-digital converter (ADC) that captures the voltage
fluctuations at its input terminal while a certain operation
is taking place is suitable to characterize itself and to distinguish it
from other binaries. The resultant power traces are enough to clearly identify
a given operation without the requirement of physical proximity.
 \keywords{\KEYWORDS{}}
\end{abstract}

\section{Introduction} \label{sec:intro}

In our current network and interconnected world, establishing platform trust
for execution of different security-critical operations is a need in diverse
fields: examples include manufacturing, automation, communications, transport,
work, and finance \cite{DBLP:journals/ijisec/CokerGLHMORSSS11}.
One approach is to use attestation mechanisms, which are suitable to verify the
integrity of several elements such as application binaries, data, or other
internal platform state. Attestation normally consists of presenting a challenge
by a verifier system that is already trusted to a prover system.

Attestation is a powerful concept to verify the integrity of untrusted systems.
Recently, different attacks have appeared that aid in circumventing attestation
by making a copy of the code that generates the checksum expected by the
verifier (memory copy attack) \cite{Seshadri2005,Seshadri2006,Chen2017}, forwarding the challenge to another device that
is able to compute the checksum properly (proxy attack) \cite{Li2011}, or using return
oriented programming gadgets to transiently hide the malicious code in parts of
memory where the verifier cannot find it (rootkit attack) \cite{Castelluccia2009}. As a result, what we
get are systems that are considered trusted by a verifier system, while the
prover has bypassed the challenge.

To harden against these attacks on attestation methods and protocols, some
proposals have considered the use of side-channel information that can be
measured externally. For example, \citet{DBLP:conf/micro/SehatbakhshNKZP19}
recently utilized electromagnetic (EM) emanation to verify honest checksum
computation. Nonetheless, these methods require proximity: a local external
testbed set up near the prover in such a way that a carefully-placed probe can
capture the EM radiation (traces) of the prover's device. Furthermore, this
testbed itself must be secured and trusted. The physical proximity requirement
directly contradicts with the goals of remote attestation, not to mention
failure to scale.

Recent trends in offensive cryptanalytic side-channel analysis are towards
\emph{remote power analysis} \cite{DBLP:conf/IEEEares/Martinez-Rodriguez21}.
These techniques allow attackers to utilize
pre-existing sensors or peripherals living in the same System-on-Chip (SoC) to
procure traces. Regarding cryptanalytic side-channel attacks, this removes the
physical proximity requirement from the threat model. In practice, these traces
feature granularity reduced by several orders of magnitude when compared to
traces captured with traditional high sampling rate oscilloscopes
(e.g.\ 1MSPS in \autoref{sec:measure} vs.\ 40GSPS in \cite{DBLP:journals/tches/LisovetsKMM21}).
Hence, remote power analysis trades this relaxed threat model for lower quality
and higher quantity of traces. \autoref{sec:bg} contains more background on both
remote power analysis and attestation.

In this paper, we propose utilizing remote power analysis for remote dynamic
attestation, eliminating the physical proximity requirement of previous EM-based
attestation proposals. \autoref{sec:measure} describes our testbed, with an
application processor (AP) that executes the binary we aim to attest, by only
sharing the Power Distribution Network (PDN) with the sensor that captures the traces.
In our case, an analog-to-digital
converter (ADC) captures the voltage fluctuations at its input terminal during
attestation. \autoref{sec:protocol} proposes an attestation protocol to establish
secure communication between prover and verifier systems in a platform-agnostic way.
\autoref{sec:eval} characterizes the degree to which the
resulting traces captured from ADC vary over different binaries, with the goal of accurately
matching traces to a priori applications with signal processing techniques via
templating.
In particular, we show that with a sufficient (yet small) number of traces, parameterized (in part)
by various error rates, we are able to achieve excellent security levels and also understand
the limitations of attestation in this novel setting.
We conclude in \autoref{sec:conclusion}.
\section{Background} \label{sec:bg}

In a typical software-based attestation, a verifier is able to establish the
absence of malware in a prover system with no physical access to its
memory. This is possible because the verifier proposes a challenge to the prover,
in which it must compute a checksum of its memory content.
This challenge can only be correctly replied to if the memory content is not
tampered, since the result of the checksum is only correct if the memory
content within the prover system is exactly as expected by the verifier.
For this, the verifier system needs to know several critical data about the
prover, such as the clock speed, the instruction set architecture, the memory
architecture of its microcontroller, and the size of its memories. This way,
if in any moment a malicious prover aims to alter its memory, it is
detectable by the verifier because the prover will present a wrong
checksum result or a delay in the response \cite{Seshadri2004}.
This means that the integrity of the prover is verified, not only
matching the checksum result with the expected result
$(Response_{prover} = Response_{expected})$, but also through a
parameter known as the request-to-response time $(t_{response} < t_{expected})$.

Numerous works focus on software-based attestation
\cite{Seshadri2005,Seshadri2006,Chen2017}.
To threaten these attestation processes, several attacks have appeared
during the last fifteen years that aim to break this attestation method.
Attackers normally attempt to forge the response
with a checksum computed in a different region of the
prover memory that duplicates the code. This allows them to
generate the expected response, which is known as a
memory copy attack \cite{Seshadri2005,Seshadri2006,Chen2017}.
Another possibility is to forward the challenge to another device
that is able to compute the checksum, then send it back to the
verifier while satisfying the request-to-response time requirement, leading
to proxy attacks. \citet{Li2011} extensively describe proxy attacks
and present attestation protocols to prevent them. The last option
consists of storing the malicious code previous to the checksum
calculation, by hiding it in other parts of the memory, allowing
the prover to compute the correct checksum while the verifier is not
able to detect the parts of the code that have been hidden, called
a rootkit attack \cite{Castelluccia2009}.

Along with software-based attestation processes, hardware-supported Trusted Execution
Environments (TEE) are frequently used to ensure that the response of the computation
is not tampered. \citet{Abera2016} allow remote control-flow path attestation of
an application without needing the code. They utilize ARM's TrustZone (TZ) in order
to avoid memory corruption attacks. \citet{Clercq2016} present a Control Flow
Integrity (CFI) mechanism, guarding against code injection and code reuse attacks.
They also present Software Integrity (SI) by storing
precomputed MACs of instructions and comparing them with the MAC
of the run-time execution. Moreover, \citet{Dessouky2017} monitor every branch,
a mitigation leveraging un-instrumented
control-flow instructions.

Besides classical attestation, several recent works have used the
EM emanations generated by a monitored system
as a consequence of a certain execution within it, to detect malware
in IoT devices. An example of this is EDDIE \cite{Nazari2017}, a method
that studies the spikes on the EM spectrum generated during a program
execution and compares them to other peaks previously learned during a
training stage. Significant differences in the spikes of EM spectrum allow
to infer the introduction of malware in the studied program. \citet{Han2017}
give a similar approach, presenting ZEUS. This is a contactless
embedded controller security monitor that is able to ensure the integrity
of certain operations, by leveraging the EM emission produced during their
execution, with no additional hardware support or software modifications.
\citet{Yang2019} and \citet{Liu2018} present very similar studies.
The latter additionally considers the problem in terms of participants of an attestation
protocol, namely, verifier and prover systems, although they do not develop a
complete protocol itself. \citet{Msgna2014} give another proposal that uses side-channel signals to check
the integrity of program executions, where the use
of power consumption templates is suitable to verify the integrity of code
without previous knowledge about it.
However, \citet{DBLP:conf/micro/SehatbakhshNKZP19} propose the first attestation
protocol based on EM signals with EMMA. The authors
observe that execution time is only one of the multiple examples
of using measurable side-channel information to gain knowledge about
a specific computation, providing in many cases much finer-grain
information than a unique temporal parameter. To develop this idea,
they design a new attestation method based on the EM
emanations generated by the prover while computing the checksum
challenge proposed by the verifier, instead of the request-to-response
time. After this, they show the implementation of this design, and
consider and evaluate different attacks on EMMA.
On the negative side, and
opposed to classical attestation, this method requires physical proximity
to the setup that captures the emanation, namely, a probe connected to an
oscilloscope or external Software Defined Radio (SDR).

On a separate issue, several recent studies consider the possibility of
attacking cryptosystems by using the side-channel information provided by
mixed-signal components, such as ADCs \cite{DBLP:journals/tches/GnadKT19,
DBLP:journals/tches/OFlynnD19}, or other sensors such as ring oscillators (ROs)
\cite{DBLP:conf/reconfig/GravellierDTL19, DBLP:conf/sp/ZhaoS18, DBLP:conf/fccm/RameshPDPPHT18}
and time-to-digital converters (TDCs) \cite{DBLP:conf/cardis/GravellierDTLO19,
DBLP:journals/dt/GnadKTSM20, DBLP:conf/iccad/SchellenbergG0T18, DBLP:conf/date/SchellenbergG0T18}.
In real-world devices, these components are already placed in the same FPGA or SoC as
a certain cryptographic module that is running some operations with secret parameters,
or available through some interfaces present in processors, e.g.\ Intel Running
Average Power Limit (RAPL) \cite{Lipp2020}. However, to the best of
our knowledge, no study exists aiming to perform
an attestation process by using the side-channel information captured by
said components, leading to what we call \textit{remote power analysis for attestation}.
This technique features the benefits of the EM side channel (finer-grain
information) without the negative sides (external setup with physical proximity).
\section{Remote Power Analysis for Attestation} \label{sec:measure}

\subsection{System description and measurements}

As mentioned in the previous section, the goal of this work is to attest an
operation run in a system by using the power leakage caused by the operation
itself. In this context, where the power consumption traces can be acquired
remotely, they can be used to attest an operation, because the procurement process
can be automated. Generally, the voltage fluctuations caused by the operation
can be captured by any mixed-signal component, that could be an ADC, a sensor
implemented on the programmable logic (PL), or any power supply monitor.

\autoref{fig:scheme} gives an overview of our system. It contains an
AP where the operation (prover) and attestation (verifier) processes are run.
Additionally, it contains the mixed-signal component that measures the supply voltage via ADC with Direct
Memory Access (DMA) while the operation is run.
The verification process saves the power trace captured by the mixed-signal
component as binary data. Since we are using this side-channel trace data
as evidence, the data must be trustworthy. Therefore, in our system, the ADC is inside the trust
perimeter. Exactly how this happens in practice depends on the TEE technology.
For example, on a platform that supports virtualization, this might be
accomplished by a two-stage Memory Mangement Unit (MMU), where the hypervisor
(or TEE) removes access from the untrusted High Level Operating System (HLOS) by
simply not mapping the second-stage translation that would allow access to the
ADC's physical address space or the memory where it stores its data. This is
indeed the scenario that \autoref{fig:scheme} depicts. On ARM-based SoCs this
could also be accomplished with a Memory Protection Unit (MPU) that would be
configured by a TZ-based TEE. The analogous upcoming technology for RISC-V would
be Physical Memory Protection (PMP) \cite{DBLP:conf/eurosys/LeeKSAS20}.
So while the concrete protection mechanism
on a given architecture depends on the TEE implementation, in this work we
generically use the Linux kernel to simulate the TEE in terms of trust, and the
kernel gates all userspace access to the ADC with traditional MMU-based access
control.

\begin{figure}[t!]
\centering
\iftrue
\includegraphics[width=1.0\columnwidth]{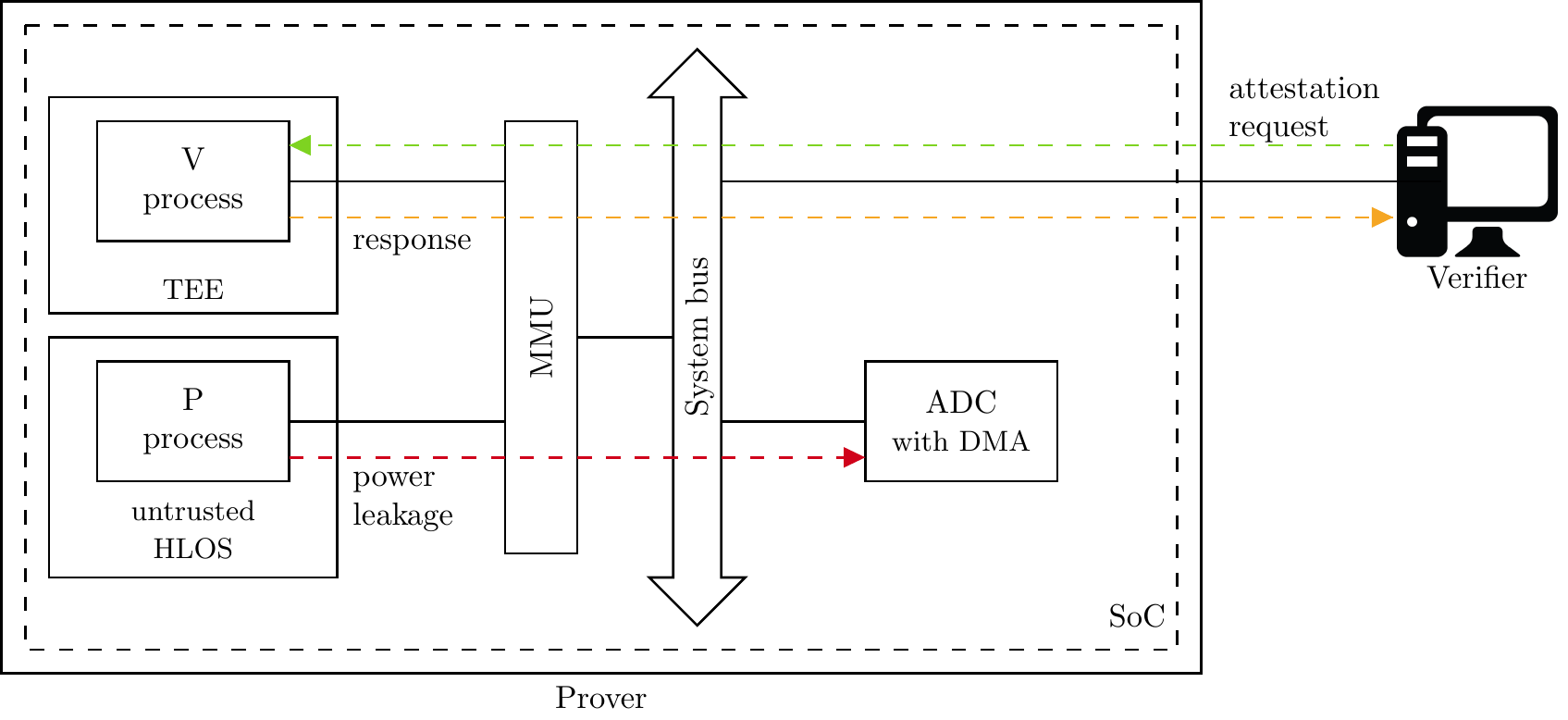}
\else
\Huge\bf PLACE\\HOLDER
\fi
\caption{Block diagram of the attestation process while running the operation.}
\label{fig:scheme}
\end{figure}

Specifically, in this paper we conduct our experiments on
a PYNQ-Z1 board. We programmed the FPGA of its Zynq 7000 chip
to activate the ADC present on it, and capture the voltage fluctuations produced
due to the execution of different operations inside the ARM Cortex-A9 processor.
With this approach, we aim to carry out the attestation process of an execution
performed within the processor, simply with the side-channel information,
given by the voltage fluctuations at the input of the ADC. This is present in
the PL, and totally isolated from the AP core, having only a shared
PDN as a common element.

The XADC module is a hard macro available in the FPGA of the Zynq 7000 chip.
This module is not only an ADC converter of the analog data connected to the
input channel, but it can also be configured to monitor the supply voltages
and temperature. The XADC module supports multichannel, however we configure it
with a single channel that monitors the internal core supply voltage.
The output data size of this module is 16-bit data, with 12-bit precision. The
sample rate is 1MSPS. The 32-bit AXI streaming output of the XADC is used to
transfer as many samples as possible to the AP. The XADC outputs samples on the
AXI Stream for each of its channels when it is enabled, in this case, only one
channel. We use DMA transfer from the PL to the AP to
move XADC samples into the AP memory, and we use an AXI GPIO to set the size of
the transfer. Since the width of the AXI stream is double the output data
size, there are two measurements at each memory position. %

During the attestation process, first the number of power measurements to be
captured is set: that is, the buffer size. Just before running the operation, the
DMA is enabled, then we run a trigger operation to indicate in the signal the beginning of the
operation to be attested, then the operation itself is run, and finally, another
trigger operation indicates that it has ended. The XADC is capturing power data
until the DMA transfer is completed. The AP reads the part of the memory where the
measurements are stored and processes it as needed.

We summarize our procurement process as follows, that saves the power traces as
binary data used in the attestation protocol.
(i) Set the buffer size;
(ii) enable the DMA engine;
(iii) send the start trigger;
(iv) execute the target binary;
(v) send the end trigger;
(vi) wait for the DMA to complete;
(vii) process the resulting binary data (trace).
In practice, the (untrusted) HLOS executes step (iv) and the TEE executes all other steps.

\begin{figure}
\centering
\iftrue
\includegraphics[width=\columnwidth]{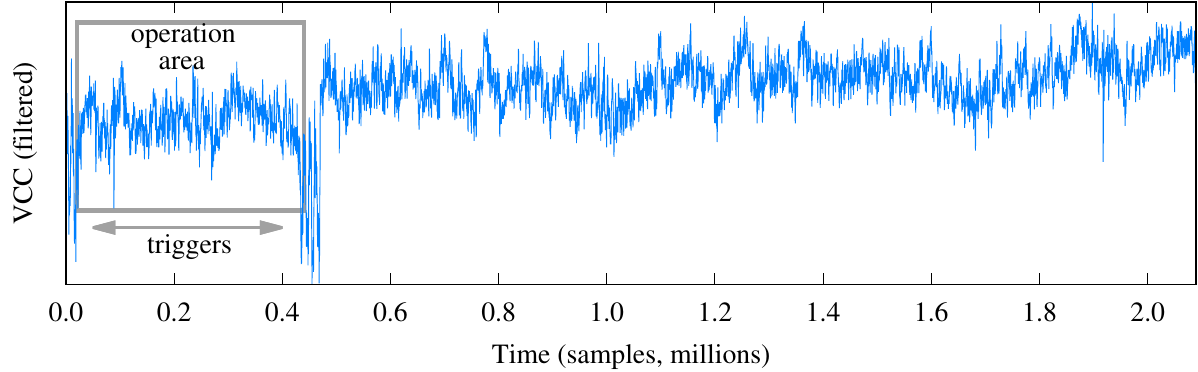}
\else
\Huge\bf PLACE\\HOLDER
\fi
\caption{Example of a power trace captured with the XADC.}
\label{fig:ex_powertrace}
\end{figure}

\autoref{fig:ex_powertrace} shows a power trace captured with the XADC while an
operation is run. The two trigger operations are shown at the beginning and the
end of the operation, determining the operation area. The rest of the trace is the
value of the supply voltage after the operation is finished. This trace is
subsequently processed in the protocol to attest the operation.
\section{System Model and Protocol Construction} \label{sec:protocol}

In this section, we demonstrate how our attestation technique can be
applied and used in real-life deployments and not remain just a
lab-concept. To this end, we present a detailed protocol showing the
communication and all the messages exchanged between the involved
entities. Our protocol description includes the definition of the
underlying system model, as well as the presentation of all the
involved entities and their specifications. Finally, we construct
our threat model and show our protocol's resistance against a
powerful malicious adversary $\mathcal{ADV}$. While this is not the core
contribution of this work, we consider it an important part
since it tackles a problem so far only dealt with at a high
level in other similar works (e.g.~\cite{DBLP:conf/micro/SehatbakhshNKZP19}).
We believe our approach can provide an impetus towards paving the
way for the integration of our, or similar, techniques in existing
services.

\subsection{System Model}
\label{subsec:architecture}
We assume the existence of the following components:

\paragraph*{\textbf{Verifier ($\mathcal{V}$)}:} Here, verifier is a user
who wishes to execute a piece of software on an untrusted platform. Prior
to exchanging the data with the untrusted platform, the user needs
to verify its trustworthiness.

\paragraph*{\textbf{Prover ($\mathcal{P}$)}:} The prover is an untrusted
platform that needs to convince a verifier of its trustworthiness. It consists
of an untrusted application and a TEE.

\begin{enumerate}
  \item \underline{Untrusted application}: The application handles the
  communication between the verifier and the untrusted platform. After proving
  its trustworthiness, the application will be responsible for executing
  software specified by the user.
  \item \underline{Trusted Execution Environment}: We assume the existence of a
  TEE residing either on the untrusted platform or in a remote location. The
  TEE is invoked by the untrusted application upon receiving an attestation
  request by a verifier. TEE's main responsibility is to measure the power
  consumption of the untrusted part of $\mathcal{P}$, while running an
  application requested by $\mathcal{V}$. (Here we recall
  that the TEE also hosts the component that takes the measurements.)
\end{enumerate}

\paragraph*{\textbf{Measurements Tray (MT)}:} MT is an entity residing in the
cloud. Its main responsibility is to store templates and compare them with
traces that are received by $\mathcal{V}$. There are two separate reasons that
led us to have MT as an independent component and not as a part of
$\mathcal{V}$.
(i) MT residing on $\mathcal{V}$'s side would result in higher local storage
costs, as $\mathcal{V}$ would have to keep a copy of each template locally.
(ii) Assuming that MT is an independent cloud component, all MT updates are
executed centrally. This eliminates the need for separate updates.

\subsection{Attestation Protocol}
\label{subsec:protocol}
Having defined our system model, we can now proceed to describe our
attestation protocol. Our construction is
divided into three phases: the \textit{Setup Phase}, the
\textit{Trusted Launch Phase} (\autoref{fig:TL}), and the
\textit{Computations Phase} (\autoref{fig:attestation}). For the rest of
this paper, we assume the existence
\cite{DBLP:journals/siamcomp/GoldwasserMR88} of an IND-CCA2 secure public key
cryptosystem, EUF-CMA secure signature scheme, and a first and second
preimage resistant hash function $H(\cdot)$.

\paragraph*{\textbf{Setup Phase:}} During this phase, each entity receives a
public/private key pair. More specifically:

\begin{itemize}
  \item $(\mathsf{pk}_{\mathcal{V}}, \mathsf{sk}_{\mathcal{V}})$ - Verifier
  $\mathcal{V}$'s public/private key pair.
  \item $(\mathsf{pk}_{\mathcal{P}}, \mathsf{sk}_{\mathcal{P}})$ - Prover
  $\mathcal{P}$'s public/private key pair.
  \item $\mathsf{(pk_{MT}, sk_{MT})}$ - MT's public/private key pair.

\end{itemize}

\paragraph*{\textbf{Trusted Launch Phase:}} In this phase, $\mathcal{V}$ wishes
to launch a TEE on the untrusted platform. The TEE will be responsible for
measuring the power consumption while $\mathcal{P}$
executes applications of $\mathcal{V}$'s choice. To facilitate
$\mathcal{V}$, we assume the existence of a setup function $F_s$,
responsible for setting up the TEE\footnote{The specifications of $F_s$ will
be TEE-dependent. However, it must be designed in such a way that any
manipulation will create a noticeable time increase in the computation of the
checksum.}. Finally, we further assume that the setup function $F_s$ is
publicly known.

This phase commences with the verifier $\mathcal{V}$ generating a random number
$r_1$ and sending $m_1 = \langle r_1, A \rangle$ to $\mathcal{P}$, where
$A$ is the unique identifier of the application that $\mathcal{V}$ wishes to
execute on the TEE. Moreover, $\mathcal{V}$ captures the current time $t_1$.
Upon reception, $\mathcal{P}$ calculates
$\mathrm{checksum}(r_1, F_s)$, and gets the result $\mathsf{res}$.
After the successful execution of $F_s$, a new TEE is launched on the
untrusted platform. Upon its creation, the TEE also obtains a public/private
key pair $\mathsf{(pk_{TEE}, sk_{TEE}})$ (sealing/unsealing keys).
The result of the checksum will be then sent back to $\mathcal{V}$ along
with the launched TEE's public key. More precisely, $\mathcal{P}$ sends the
following message to $\mathcal{V}$:
${m_2 = \langle r_2,  \mathsf{Enc}_{\mathsf{pk}_{\mathcal{V}}}(\mathcal{P},
\mathsf{res, pk_{TEE}} ), \sigma_{\mathcal{P}}(H_1) \rangle}$,
where ${H_1 = H(r_2||\mathcal{P}||\mathsf{res||pk_{TEE}})}$. Upon reception,
$\mathcal{V}$ captures the time $t_2$ and calculates $\Delta t = t_2 - t_1$
(possible because $\mathcal{V}$
also knows the function $F_s$). If $\Delta t$ is as expected, then
$\mathcal{V}$ knows that there is a TEE residing on the untrusted platform.
\autoref{fig:TL} illustrates this phase.

\begin{figure}
  \centering
  \ifLINUXBUILD
  \includegraphics[width=1.0\linewidth, trim=6mm 22mm 0 0, clip]{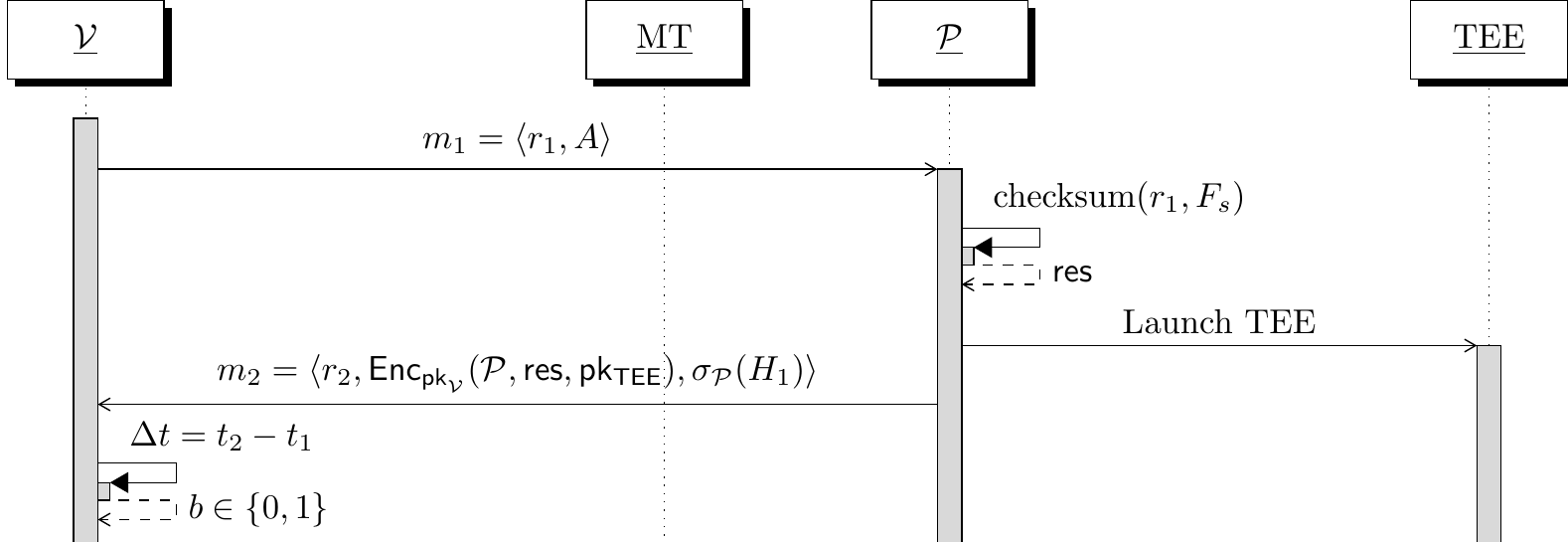}
  \else
  \Huge\bf PLACE\\HOLDER
  \fi
  \caption{Trusted Launch Phase.}
  \label{fig:TL}
\end{figure}

\paragraph*{\textbf{Computations Phase:}} After the successful execution of the
Trusted Launch Phase, $\mathcal{V}$ is convinced that a newly launched TEE is
residing on the untrusted platform. $\mathcal{V}$ wishes to run an executable
application $A$ on the untrusted part of $\mathcal{P}$. To ensure that the
results will be accurate, $\mathcal{V}$ first decides the number of required
traces. This decision depends on the statistical results
described later (\autoref{sec:eval}). After deciding the number of runs $n$,
$\mathcal{V}$ initiates the protocol. To this end,
$\mathcal{V}$ first generates a token $\tau$ and a fresh random number $r_3$,
and contacts $\mathcal{P}$ by sending $m_3 = \langle r_3, n, \tau, A,
\sigma_{\mathcal{V}}(H_2) \rangle$, where $H_2 = H(r_3||\tau||A||n)$ and $A$
is the unique identifier of the application that $\mathcal{V}$ wishes to execute
on $\mathcal{P}$. Upon reception, $\mathcal{P}$ starts running application
$A \ n$ times  with $\tau$ as input, and produces an output $\mathsf{out}$, and
a fingerprint $H(\tau, A)$. Simultaneously, the TEE measures the power
consumption of the untrusted part of $\mathcal{P}$ to get a sequence of
traces $\{tr\}_{i=1}^n$ (one for each execution of $A$).
As soon as $\mathcal{P}$ outputs $\mathsf{out}$,
it sends an acknowledgement $\mathsf{ack}$ to the TEE. Upon reception, the TEE
will
reply to $\mathcal{P}$ with $m_4 = \langle r_4,
\mathsf{Enc}_{pk_{\mathcal{V}}}(\{tr\}_{i=1}^n), \sigma_{TEE}(H_3) \rangle$,
where $H_3 = H(r_4||tr_1||\dots||tr_n)$. $\mathcal{P}$ will finally send $m_5 =
\langle r_5, m_4, H(\tau, A), \mathsf{out}, \sigma_{\mathcal{P}}(H_4)$,
where $H_4 = H(r_5||m_4||\mathsf{out})$. Upon receiving $m_5$,
$\mathcal{V}$ verifies the signatures of the TEE and $\mathcal{P}$, and
the freshness of both $m_4$ and $m_5$ messages. After the first successful
execution of the protocol, $\mathcal{V}$ commences a fresh run, until she
gathers all the required traces. When $\mathcal{V}$ gets the desired number of
traces, she generates $m_6 = \langle r_6,
\mathsf{Enc}_{\mathsf{pk_{MT}}}(\tau, \mathsf{out}, A, \{tr_i\}_{i=1}^n),
\sigma_{\mathcal{V}}(H_3) \rangle$, where $H_3 =
H(r_6 || \tau|| A ||\mathsf{out} || \{tr_i\}_{i=1}^n)$ and sends it to MT.
MT can then check the trust level of $\mathcal{P}$ by comparing $\mathsf{out}$
and each $tr_i$ against its pre-computed list of measurements. Finally, MT
outputs a bit $b \in \{0, 1\}$ and sends $m_7 = \langle r_7,
\mathsf{Enc}_{\mathsf{pk}_{\mathcal{V}}}(b), \sigma_{MT}(H_4) \rangle$,
where $H_4 = H(r_7||b)$ to $\mathcal{V}$.
\autoref{fig:attestation} depicts the Computations Phase.

\begin{figure}
  \centering
  \ifLINUXBUILD
  \includegraphics[width=1.0\linewidth, trim=6mm 16mm 0 0, clip]{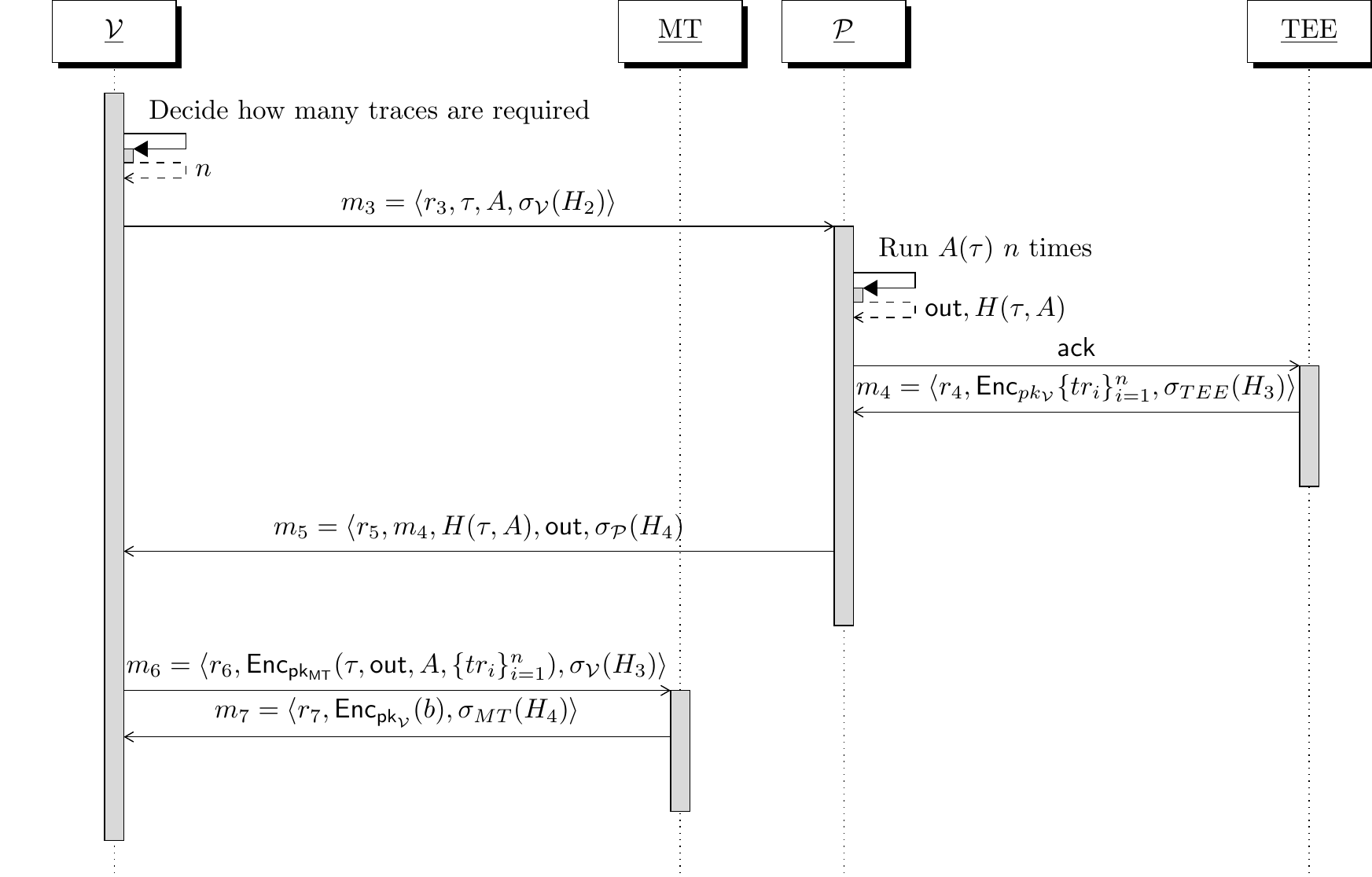}
  \else
  \Huge\bf PLACE\\HOLDER
  \fi
  \caption{Computations Phase: we assume $\mathcal{P}$ and the TEE reside
   on the same platform.}
  \label{fig:attestation}
\end{figure}

\subsection{Threat Model}
\label{sec:ThreatModel}

Our threat model is based on the Dolev-Yao
adversarial model~\cite{dolev:1983}. Furthermore, we assume that
$\mathcal{ADV}$ can load programs of her choice in the enclaves and observe
their output. This assumption significantly strengthens $\mathcal{ADV}$ since
we need to ensure that such an attack will not be detectable from
$\mathcal{V}$'s point of view. Finally, we extend the above threat model by
defining a set of attacks available to $\mathcal{ADV}$.

\begin{myAttack}[Measurements Substitution Attack] Let $\mathcal{ADV}$ be an
adversary that has full control of the untrusted part of $\mathcal{P}$.
$\mathcal{ADV}$ successfully launches a Measurements Substitution Attack if she
manages to substitute the measurements received from TEE by some others of her
choice, in a way that is indistinguishable for $\mathcal{V}$.
\end{myAttack}

\begin{myAttack}[False Result Attack] Let $\mathcal{ADV}$ be an adversary that
overhears the communication between $\mathcal{V}$ and MT. $\mathcal{ADV}$
successfully launches a False Result Attack if she can tamper with the response
sent from MT to $\mathcal{V}$.
\end{myAttack}

While the first two attacks target directly the protocol communication, we also
define a third attack that aims at targeting a false positive (FP) case,
analyzed further in \autoref{sec:eval}. An adversary could exploit the FP by
substituting $\mathcal{V}$'s application with another of her choice. The
resulting trace, even if it comes from a different application, could still
pass as valid by $\mathcal{V}$.

\begin{myAttack}[Application Substitution Attack] \label{attack:application}
Let $\mathcal{ADV}$ be an adversary that overhears the communication between
$\mathcal{V}$ and $\mathcal{P}$. $\mathcal{ADV}$ successfully launches an
Application Substitution Attack if she manages to replace the trace that
$\mathcal{V}$ is expecting with another of her choice, with non-negligible
advantage, where the advantage of $\mathcal{ADV}$ is defined to be the
following conditional probability:
\begin{equation*}
\mathrm{Adv}_{\mathcal{ADV}} = Pr[\mathcal{V}\ \text{accepts the trace}\ |\
\mathcal{ADV}\ \text{switched the application}]
\end{equation*}
\end{myAttack}

\subsection{Security Analysis}
\label{sec:SecAnal}

We now prove the security of our protocol in the presence of a
malicious adversary $\mathcal{ADV}$ as defined in \autoref{sec:ThreatModel}.

\begin{proposition}[Measurements Substitution Attack Soundness] Let
$\mathcal{ADV}$ be an adversary that has full control of the untrusted part of
$\mathcal{P}$. Then $\mathcal{ADV}$ cannot perform a Measurements Substitution
Attack.
\end{proposition}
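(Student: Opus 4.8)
The plan is to argue by reduction: I would show that any adversary $\mathcal{ADV}$ who controls the untrusted part of $\mathcal{P}$ and nonetheless succeeds in substituting the traces can be converted into an adversary against one of the assumed primitives, namely the EUF-CMA signature scheme or the second-preimage-resistant hash $H$. The starting observation is that the measurements leave the TEE only inside $m_4 = \langle r_4, \mathsf{Enc}_{pk_{\mathcal{V}}}(\{tr_i\}_{i=1}^n), \sigma_{TEE}(H_3)\rangle$ with $H_3 = H(r_4 || tr_1 || \dots || tr_n)$, so they are bound to a TEE signature over the hash of their plaintext. Since $\mathcal{ADV}$ owns the untrusted platform, she can forge $\mathcal{P}$'s signature on $m_5$ and, knowing the public $pk_{\mathcal{V}}$, encrypt any traces of her choice; hence the prover-side wrapping offers her no help, and the entire security rests on the TEE signature, whose key $sk_{TEE}$ she does not possess.

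Next I would characterize precisely when $\mathcal{V}$ accepts: she decrypts the ciphertext to recover some $\{tr_i'\}_{i=1}^n$, recomputes $H_3^* = H(r_4^* || tr_1' || \dots || tr_n')$ from the decrypted plaintext, checks $\sigma_{TEE}$ against $H_3^*$ under $pk_{TEE}$, and checks the freshness of $m_4$. A successful substitution means $\mathcal{V}$ accepts while $\{tr_i'\} \neq \{tr_i\}$, the genuine measurements. I would then split into three cases. (i) If the pair $(H_3^*, \sigma_{TEE})$ was never produced by the genuine TEE, then $\mathcal{ADV}$ has output a valid forgery on a fresh message, contradicting EUF-CMA security. (ii) If $\mathcal{ADV}$ reuses a genuine TEE signature on $H_3$ but supplies data with $(r_4^*, \{tr_i'\}) \neq (r_4, \{tr_i\})$ such that $H(r_4^* || tr_1' || \dots || tr_n') = H(r_4 || tr_1 || \dots || tr_n)$, then she has produced a second preimage of $H$, contradicting its second-preimage resistance. (iii) If $\mathcal{ADV}$ replays an entire $m_4$ the TEE signed in an earlier run (possibly for another application), the freshness check rejects it. In all cases the success probability is negligible.

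The step I expect to be the main obstacle is handling the encryption layer cleanly. Because $pk_{\mathcal{V}}$ is public, $\mathcal{ADV}$ can mint ciphertexts of arbitrary traces, so confidentiality alone never stops her; the argument must pin the binding entirely on the fact that the TEE signs the hash of the \emph{plaintext} traces and that $\mathcal{V}$ re-derives that hash from the \emph{decrypted} plaintext rather than trusting any value supplied alongside the ciphertext. A secondary subtlety is ruling out cross-session reuse: I must make explicit that the freshness check ties each accepted $m_4$ to the current challenge, so that no previously valid (and hence genuinely TEE-signed) measurement set from a different run can be spliced in. Once these two points are nailed down, the three-case reduction closes the proof.
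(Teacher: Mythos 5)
Your proposal is correct and follows essentially the same route as the paper's proof: rule out fresh forgeries of $m_4$ via EUF-CMA security of the TEE's signature, and rule out replays of an old $m_4$ via the freshness nonce $r_4$ being bound into the signed hash, so that altering it requires a second preimage of $H$. Your version is somewhat more carefully structured (explicit acceptance condition, explicit reduction cases, and the observation that the verifier must recompute the hash from the decrypted plaintext), but the decomposition and the underlying assumptions invoked are the same.
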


\begin{proof}
For $\mathcal{ADV}$ to successfully launch a Measurements Substitution Attack,
she needs to replace the measurements $\mu$, with some other measurements
$\mu'$ of her choice. To do so, $\mathcal{ADV}$ can either generate a fresh
$\mu'$, or replay an old one. In both cases, $\mathcal{ADV}$ must
generate a message ${m_5 = \langle r, m_4, H(\tau, A),
\mathsf{out}, \sigma_{\mathcal{P}}(H(r||m_4||\mathsf{out}))\rangle}$. It is
clear from the message structure, that the only component that
$\mathcal{P}$ cannot forge, is the message $m_4$ included in $m_5$.
As $m_4$ is signed by the TEE, and given the EUF-CMA security of
the signature scheme, $\mathcal{ADV}$ can only forge the TEE's signature with
negligible probability. Hence, the only alternative for $\mathcal{ADV}$ is to
use an older $m_4$ message that she received from the TEE sometime in the past.
Let $m_{4_{old}}$ be the old $m_4$ message such that $m_{4_{old}} = \langle
r_{4_{old}}, \mathsf{Enc}_{pk_{\mathcal{V}}}(\mu), \sigma_{TEE}(H_3) \rangle$,
where $H_3 = H(r_{4_{old}}||\mu') $. While this approach solves the problem
of forging TEE's signature, $\mathcal{ADV}$ now needs to further tamper with
this message by replacing $r_{4_{old}}$, with a fresh random number. This is
important because otherwise, $\mathcal{V}$ will not be able to verify the
freshness of the message, and will thus abort the protocol. However,
$r_{4_{old}}$ is included in the signed hash of the TEE, and given the
second preimage resistance of the hash function $H$ we have that $H(r, \mu')
\neq H(r', \mu') \ \forall r, r'$ such that $r \neq r'$. Hence,
$\mathcal{V}$ realizes that something is wrong and aborts the protocol.

\end{proof}

\begin{proposition}[False Result Attack Soundness]  Let $\mathcal{ADV}$ be a
malicious adversary that overhears the communication between $\mathcal{V}$ and
MT. Then $\mathcal{ADV}$ cannot successfully perform a False Result Attack.

\end{proposition}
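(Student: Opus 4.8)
The plan is to follow the template of the Measurements Substitution proof, since the only message flowing from MT to $\mathcal{V}$ is $m_7 = \langle r_7, \mathsf{Enc}_{\mathsf{pk}_{\mathcal{V}}}(b), \sigma_{MT}(H_4) \rangle$ with $H_4 = H(r_7 || b)$. A False Result Attack succeeds exactly when $\mathcal{ADV}$ delivers to $\mathcal{V}$ a message that (i) verifies under $\mathsf{pk}_{MT}$, (ii) passes $\mathcal{V}$'s freshness check, yet (iii) decrypts to a bit $b' \neq b$ of $\mathcal{ADV}$'s choosing. So I would enumerate the two ways she could produce such a message — forging a fresh one or replaying an old one — and rule each out.

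First I would treat the forgery case. Because the result field is encrypted under $\mathcal{V}$'s public key, $\mathcal{ADV}$ can freely compute $\mathsf{Enc}_{\mathsf{pk}_{\mathcal{V}}}(b')$ herself; the obstacle is the companion signature, which must verify as $\sigma_{MT}(H(r_7 || b'))$. Here I would stress that the signature binds the \emph{plaintext} bit via $H(r_7 || b)$, so flipping the bit that $\mathcal{V}$ recovers forces a signature on a hash MT never signed. Since $\mathcal{ADV}$ does not hold $\mathsf{sk}_{MT}$, this is an existential forgery, and by the EUF-CMA security of the signature scheme it succeeds only with negligible probability.

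The remaining avenue is replay: reusing an honest, MT-signed $m_7$ whose embedded bit happens to suit $\mathcal{ADV}$. Here I would argue on two fronts. As each $m_7$ carries a fresh $r_7$ bound inside the signed hash $H(r_7 || b)$, $\mathcal{V}$'s freshness check rejects any stale random number; and — exactly as in the previous proof — second-preimage resistance of $H$ prevents $\mathcal{ADV}$ from splicing in a new random number without invalidating MT's signature, since $H(r, b) \neq H(r', b)$ for $r \neq r'$. Moreover, because $b$ travels under an IND-CCA2 scheme, $\mathcal{ADV}$ cannot even distinguish which intercepted ciphertext encodes the accepting bit, so targeted replay is infeasible in addition to being detectable. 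Combining these observations, no strategy lets $\mathcal{ADV}$ hand $\mathcal{V}$ a flipped result with non-negligible probability.

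I expect the main obstacle to be making the freshness argument airtight rather than the two cryptographic reductions, which follow routinely from EUF-CMA and IND-CCA2. The delicate point is that $r_7$ is a nonce freshly chosen by MT rather than an echo of $\mathcal{V}$'s challenge in $m_6$, so I must rely on the same freshness assumption underlying the previous proof — namely that $\mathcal{V}$ keeps enough session state to reject a stale $r_7$ — and then, once a replayed nonce is recognized, the impossibility of repairing it into a fresh-looking message reduces cleanly to second-preimage resistance exactly as before.
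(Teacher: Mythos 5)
Your proof follows the same two-case decomposition as the paper's — (i) substituting the encrypted bit, blocked because the bit is bound inside the MT-signed hash and forging that signature contradicts EUF-CMA security, and (ii) replaying an old $m_7$, blocked because the random number is likewise bound in the signed hash — so it is essentially the paper's argument. Your added remarks (explicitly invoking second-preimage resistance for the nonce-splicing step and IND-CCA2 for ciphertext indistinguishability) are harmless refinements rather than a different route, and you correctly identify the relevant message as $m_7$ where the paper's proof contains a slip calling it $m_6$.
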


\begin{proof}
For $\mathcal{ADV}$ to launch a False Result Attack, she needs to forge the
message $m_6$ sent by MT to $\mathcal{V}$ in a way that $\mathcal{V}$ will not
be able to distinguish any difference. To do so, $\mathcal{ADV}$ has two
choices:
(i) substitute the encrypted bit, with a bit of her choice;
(ii) replay an older message.

Substituting the encrypted result is feasible since
$\mathcal{V}$'s public key is publicly known. Hence, it is straightforward for
$\mathcal{ADV}$ to encrypt a bit under $\mathsf{pk}_{\mathcal{V}}$ and replace
it with the actual encrypted result. However, since the encrypted bit is also
included in the signed hash, $\mathcal{V}$ will be able to ascertain that
the integrity of the message has been violated. Thereupon, for $\mathcal{ADV}$
to successfully substitute the encrypted bit, she needs to also forge the MT's
signature. Given the EUF-CMA security of the signature scheme,
this can only happen with negligible probability and so, the attack fails.

Insomuch as $\mathcal{ADV}$ overhears the communication between
$\mathcal{V}$ and MT, she has knowledge of the random numbers used to ensure
the freshness of the messages. On that account, $\mathcal{ADV}$ could try to
forward to $\mathcal{V}$ an older $m_7$ message, with a fresh random number.
However, just like in the previous case, the random number is also included in
the signed hash, and consequently $\mathcal{ADV}$ would once again have to
forge the MT's signature, which can only happen with negligible probability.

The above proofs support our claim that in both cases the attack can only
succeed with negligible probability. As a result, $\mathcal{ADV}$ cannot
successfully launch a False Result Attack.

\end{proof}

\begin{proposition}
Let $n$ be the total number of traces captured to perform an attestation process.
Let $p_{\alpha}$ be the probability of an attacker obtaining
a success result with a single trace derived from another operation,
and $p_{\beta}$ the honest user success probability, using
a single trace coming from the appropriate operation. Assuming that
$p_{\beta}>p_{\alpha}$, there exists a threshold number of traces,
$x_{th}$, required to pass the attestation process, for which
$P(\alpha)=0$ and $P(\beta)=1$,
using a sufficiently large number $n$ of traces.
\label{prop:several_traces}
\end{proposition}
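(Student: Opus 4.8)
The plan is to recast the attestation decision as a statistical hypothesis test over $n$ independent Bernoulli trials — one per captured trace — and then invoke a concentration-of-measure argument so that the honest and adversarial acceptance probabilities separate cleanly as $n$ grows. The intuition is the Law of Large Numbers: the empirical fraction of ``passing'' traces concentrates around $p_\beta$ for an honest prover and around $p_\alpha$ for the attacker, so any threshold placed strictly between these two values eventually accepts the former and rejects the latter with overwhelming probability.

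First I would formalize the acceptance rule: the verifier (through MT) accepts precisely when the number of matched traces meets or exceeds the threshold $x_{th}$. Writing $X_\beta = \sum_{i=1}^n B_i$ for the honest prover's success count, with each $B_i$ an independent $\mathrm{Bernoulli}(p_\beta)$ variable, and $X_\alpha = \sum_{i=1}^n B_i'$ for the attacker's count with $B_i' \sim \mathrm{Bernoulli}(p_\alpha)$, we obtain $X_\beta \sim \mathrm{Binomial}(n, p_\beta)$ and $X_\alpha \sim \mathrm{Binomial}(n, p_\alpha)$ with means $n p_\beta$ and $n p_\alpha$. The quantities of interest are then $P(\beta) = P(X_\beta \geq x_{th})$ and $P(\alpha) = P(X_\alpha \geq x_{th})$. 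The independence of the per-trace outcomes is justified by the Computations Phase of \autoref{subsec:protocol}, where $\mathcal{P}$ runs $A$ in $n$ separate executions, each yielding its own trace.

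The key step exploits the hypothesis $p_\beta > p_\alpha$ to wedge the threshold strictly between the two means. I would fix any real $t$ with $p_\alpha < t < p_\beta$ and set $x_{th} = \lceil n t \rceil$, so the threshold fraction $x_{th}/n \to t$. Applying Hoeffding's inequality to each tail gives
\begin{gather*}
P(\alpha) = P(X_\alpha \geq x_{th}) \leq \exp\!\left(-2 n (t - p_\alpha)^2\right), \\
1 - P(\beta) = P(X_\beta < x_{th}) \leq \exp\!\left(-2 n (p_\beta - t)^2\right).
\end{gather*}
Since $t - p_\alpha > 0$ and $p_\beta - t > 0$, both bounds decay exponentially in $n$, so $P(\alpha) \to 0$ and $P(\beta) \to 1$, which is exactly the claimed behaviour.

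The main obstacle is interpretive rather than technical. For any finite $n$ the binomial tails are strictly positive, so the literal equalities $P(\alpha) = 0$ and $P(\beta) = 1$ can hold only in the limit. I would therefore state the result asymptotically: a threshold $x_{th}$ exists whose false-accept probability $P(\alpha)$ and false-reject probability $1 - P(\beta)$ are each driven below any prescribed bound by taking $n$ sufficiently large, and I would make this ``sufficiently large $n$'' quantification explicit so the proposition is not read as claiming exact vanishing at finite $n$. A secondary point worth flagging is robustness of the i.i.d.\ model: if successive traces were correlated, the clean Hoeffding argument would have to be replaced by a martingale-type concentration bound, but the fresh-run structure of the protocol makes the independent-trials model the natural one here.
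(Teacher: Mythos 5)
Your proof is correct, and it reaches the same conclusion by the same high-level strategy as the paper --- model each trace as an independent Bernoulli trial, observe that the empirical pass fraction concentrates at $p_\beta$ for the honest prover and at $p_\alpha$ for the attacker, and wedge the threshold $x_{th}/n$ strictly between the two --- but the concentration tool you invoke is genuinely different. The paper uses the Strong Law of Large Numbers to assert almost-sure convergence of $x/n$ to $p$, and then reads off $P(\alpha)=0$ and $P(\beta)=1$ directly from the limit; this is purely qualitative and, as written, somewhat loose (the event $\lim_{n\to\infty} x_\alpha/n \geq x_{th}/n$ compares a limit against a quantity that itself depends on $n$, and the exact equalities $P(\alpha)=0$, $P(\beta)=1$ hold only in the limit, not at any finite $n$). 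Your Hoeffding-based argument buys two things the paper's SLLN argument does not: an explicit exponential rate $\exp(-2n(t-p_\alpha)^2)$ quantifying how fast the error probabilities vanish, and a clean finite-$n$ statement that directly supports the paper's own Section~\ref{subsec:several_traces}, where concrete values ($n=243$, $x_{th}=94$ for $2^{-128}$ security) are obtained by exact binomial tail computation --- your bound is precisely the analytic envelope of those numerical results. Your explicit flag that the literal equalities $P(\alpha)=0$ and $P(\beta)=1$ cannot hold at finite $n$ is a correct and worthwhile clarification that the paper's proof glosses over; the proposition should indeed be read asymptotically, or as ``below any prescribed error bound for sufficiently large $n$,'' which is how the paper actually uses it in practice.
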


\begin{proof}Let $\{X_{i}\}$ be a succession of independent random
variables that take one of two different results:
\[
X_{i} =
\begin{cases}
0, \text{ do not pass the attestation process} \\
1, \text{ pass the attestation process}
\end{cases}
\]
Let $x=\sum_{i=1}^{n}X_{i}$ be the number of times we pass the attestation process.
We know, by the strong law of large numbers (SLLN) that:
\begin{equation}
P(\lim_{n \to \infty}\frac{\sum_{i=1}^{n}X_{i}}{n}=E(X_{i}))=1
\label{eq:expectation}
\end{equation}
where $E(X_{i})$ is the expected value of variable $X_{i}$. From the
binomial distribution formula:
\begin{equation}
P(x)=\binom{n}{x}\cdot p^{x}\cdot(1-p)^{n-x}
\label{eq:binomial}
\end{equation}
Leveraging that, in the binomial distribution, the expected value
$E(X_{i})$ of a variable matches with its probability $p$. We can
substitute in \autoref{eq:expectation}, yielding:
\[
P(\lim_{n \to \infty}\frac{\sum_{i=1}^{n}X_{i}}{n}=p)=1\Rightarrow
P(\lim_{n\rightarrow\infty}\frac{x}{n}=p)=1
\]
Let us consider now the two different cases of $p_{\alpha}$ and $p_{\beta}$.
For a sufficiently large $n$, we get that:
\begin{equation}
P(\lim_{n \to \infty}\frac{x_{\alpha}}{n}=p_{\alpha})=1\Rightarrow \lim_{n \to
\infty}\frac{x_{\alpha}}{n}=p_{\alpha} \text{ almost surely}
\footnote{Notice that ``almost surely'' is a concept used in probability theory
to describe events that occur with probability 1 when the sample space
is an infinite set.}
\label{eq:lim_p_alpha}
\end{equation}

\begin{equation}
P(\lim_{n \to \infty}\frac{x_{\beta}}{n}=p_{\beta})=1\Rightarrow \lim_{n \to
\infty}\frac{x_{\beta}}{n}=p_{\beta} \text{ almost surely}
\label{eq:lim_p_beta}
\end{equation}
Since $p_{\beta}>p_{\alpha}$ by assumption
(an essential condition to perform a solid attestation), we
can select $x_{th}$ defined as the threshold number of traces such
that $p_{\alpha}<\frac{x_{th}}{n}<p_{\beta}$ for which we need a
number $\frac{x}{n}\geq\frac{x_{th}}{n}$ to have a positive result
of the attestation process.
Then, from \autoref{eq:lim_p_alpha} and the fact that $p_{\alpha}<\frac{x_{th}}{n}$
by definition, we get:
\[
P(\alpha)=P(\lim_{n \to \infty}\frac{x_{\alpha}}{n}\geq\frac{x_{th}}{n})=0
\]
Analogously, from \autoref{eq:lim_p_beta} and the fact
that $p_{\beta}>\frac{x_{th}}{n}$ by definition, we get:
\[
P(\beta)=P(\lim_{n \to \infty}\frac{x_{\beta}}{n}\geq\frac{x_{th}}{n})=1
\]
\end{proof}
\section{Evaluation} \label{sec:eval}
With the previous sections explaining how we capture the traces from the ADC and how we establish
security between the prover and the verifier system, we are in a
position to explain our experiments.
We open with a description of our analysis framework, including empirical evaluation (\autoref{subsec:corr}).
We then close with a security analysis of the different framework parameters (\autoref{subsec:several_traces}),
guiding selection when instantiating the \autoref{sec:protocol} protocols.

\subsection{Methodology} \label{subsec:corr}
Our procedure consists of
(i) selecting the traces generated by the ADC that belong to a given program;
(ii) generating a template by averaging a large number of traces; and
(iii) comparing this template to different traces, some that belong and some
that do not to the same given program.
Our methodology uses profiling both to build the templates and calculate the
correlation threshold that operations should surpass to complete the attestation,
both which vary across binaries (see Table 2). As an alternative,
non-profiled approaches could be an interesting research direction to potentially
improve scalability and agility.
We utilize the Pearson correlation for our comparison metric.
These, ultimately, will lead to
statistics about the true positive (TP), true negative (TN), false negative (FN),
and false positive (FP) rates that will allow us to make a concrete analysis
concerning the suitability of the traces retrieved by a given sensor or
peripheral to perform attestation. From this data, we obtain
the following parameters, which are typical in information classification,
that give an idea of the accuracy and relevance of our experiments.
\begin{equation*}
Precision = \frac{TP}{TP+FP}
\quad
Recall = \frac{TP}{TP+FN}
\quad
F1 = 2 \cdot \frac{Recall \cdot Precision}{Recall+Precision}
\end{equation*}
This way, precision gives a measure of the number of correct results among
all the returned results, while recall gives a measure of the number of
correct results divided by the number of results that should have been returned.
This means that a low result of precision implies that a high number of incorrect
results are considered as correct, so our system would be yielding many FP. On
the other hand, a low recall implies that we are not considering as correct
some results that indeed are correct, leading then, to a high number of FN.
Finally, F1 is the harmonic mean between recall and precision, and allows us
to give an idea of how good our system is at retrieving results with one
single measurement.

Specifically, we utilize the executables in the Bristol Energy Efficiency Benchmark
Suite (BEEBS) \cite{DBLP:journals/corr/PallisterHB13,DBLP:journals/cj/PallisterHB15},
providing a broad spectrum of programs\footnote{\url{https://github.com/mageec/beebs}}
to profile w.r.t.\ our methodology. We execute the BEEBS programs and capture
their traces from the ADC in order to perform attestation by comparing
each trace with templates previously obtained for the other BEEBS programs.
The aim is that a trace coming from a certain program only
matches (leading to a high correlation value) with the template belonging to its
own program and does not match (yielding a low correlation value) with
other programs' templates.

To accomplish this, we first capture 1000 traces from each program and
generate templates from them. Then, we apply a Savitzky-Golay filter to
obtain the final template. \autoref{fig:Templates} depicts the final
template of five BEEBS programs, where the trigger operations determine each program operation
area (see \autoref{fig:ex_powertrace}). It is important to notice that the amplitude
shift in the traces is not a reliable differentiator, since it depends on
the moment the traces are taken.

\begin{figure}[t!]
\centering
\iftrue
\includegraphics[width=\columnwidth]{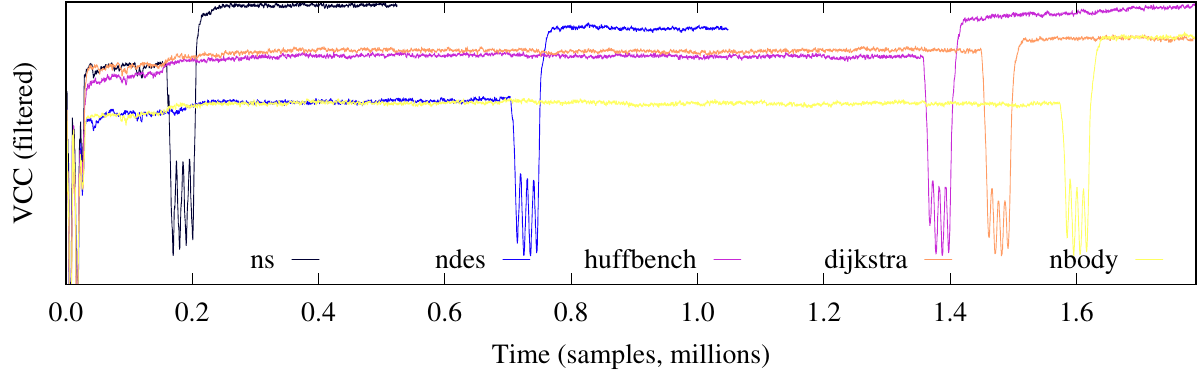}
\else
\Huge\bf PLACE\\HOLDER
\fi
\caption{Several templates selected, with different number of samples.}
\label{fig:Templates}
\end{figure}

One observation from \autoref{fig:Templates} is that the templates associated to
each program take different times, translated into different number of samples,
to complete their execution. The program will be running while the ADC is
capturing samples between the start and the ending trigger. The rest of
the trace after the end trigger is simply noise. Our ADC captures $2^{21}$ samples for
every trace and template, but in order to perform the correlations, each program
and its template are separated into different groups according to their lengths
ranging from $2^{17}$ to $2^{21}$ samples. This way, depending on the
length of the different templates, we keep the execution part from the traces
and templates by selecting a number of samples between
$2^{17}$ to $2^{21}$ samples, trying to catch the relevant information from said
execution. Since the rest of the trace, once the program is
executed and finished, is simply noise, we discard it.

\begin{figure}[t!]
\centering
\iftrue
\includegraphics[width=\columnwidth]{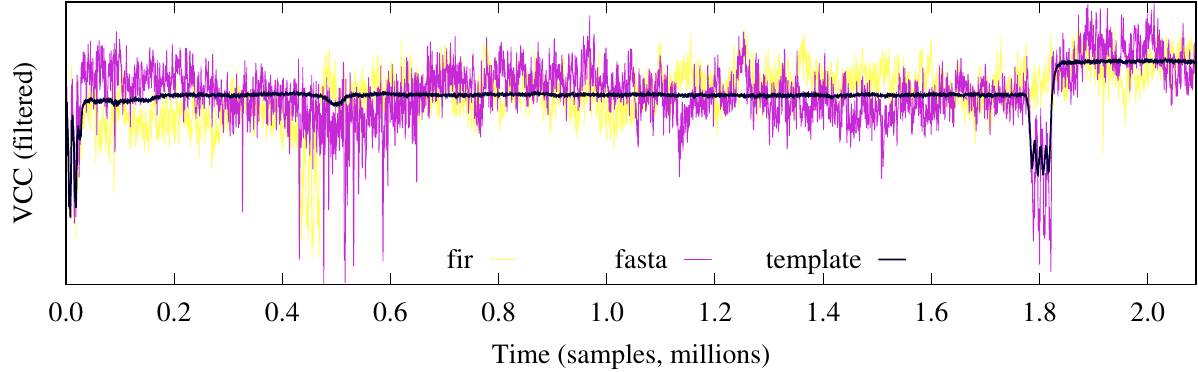}
\else
\Huge\bf PLACE\\HOLDER
\fi
\caption{Matching of the fasta template (black) with two different traces (fasta, fir).}
\label{fig:matching}
\end{figure}

After computing all the templates, our experiment consists of capturing
1000 traces for every program and comparing them against their own template. To
achieve this, we compute the Pearson correlation
between every trace of the program selected and its template.
We store this in
a correlation vector $corr_{vector}= (corr_{1}, corr_{2},..., corr_{1000})$,
then the components of $corr_{vector}$ are ordered from smallest to largest,
in order to compute the 25th percentile. This is to say, we compute a
threshold value $corr_{thres}$ from which 750 components of the vector are above
it. In practical terms, this means that during the matching stage, a trace that
has a correlation value above the threshold will be considered as
representative of an honest execution from a certain program, while a trace with
a correlation value lower than this threshold cannot certify that the trace
belongs to the program related to the template. This matching stage,
in practical terms, works as a training set for our correlation system where we select
a threshold value that let pass the 75\% of the traces from that matching set. When
we move to the evaluation stage, the threshold value is the one that we previously selected,
but it does not need to pass exactly the 75\% of the traces, since the traces from
the evaluation stage are not the same as the one from the matching stage. Nonetheless, it should yield
a similar ratio of traces that pass the attestation. This way, we ensure that the threshold
admits a sufficient number of traces, without allowing a large number coming from other operations.
In \autoref{fig:matching}, it is
visible to the naked-eye how a correct trace matches with its template, against
a trace that does not belong to the program according to the used template.

After this, we select a template and compute correlations for all the BEEBS executables,
with a set of 1000 traces from each program. If traces that do not belong to the program being
checked match with the corresponding template, having a correlation value
higher than the threshold, we will have a FP. On the contrary, if a trace that belongs to this program
does not match with its own template, having a correlation value lower than the threshold, we will have
a FN. From these statistics, we compute recall, precision, and F1 score as previously defined.
To obtain those statistics, we first trim the traces to have the
same number of samples than the template used in each case, in order to be able to
compute the Pearson correlation value between the traces and the given template.
\autoref{fig:explanation} illustrates the whole process, repeated for the templates obtained from each program.
We verified that this process can be carried out even if the templates are generated from a set of data from a
given board of a certain model, and
the evaluation stage where we compute the correlation of the traces with the templates previously generated are
captured from a different board of the same model. This characteristic demonstrates the robustness of our attestation protocol, and its
utility in systems where we can have pre-loaded templates for given binaries coming from different
devices.

\begin{figure} [h!]
\centering
\iftrue
\includegraphics[width=1.0\columnwidth]{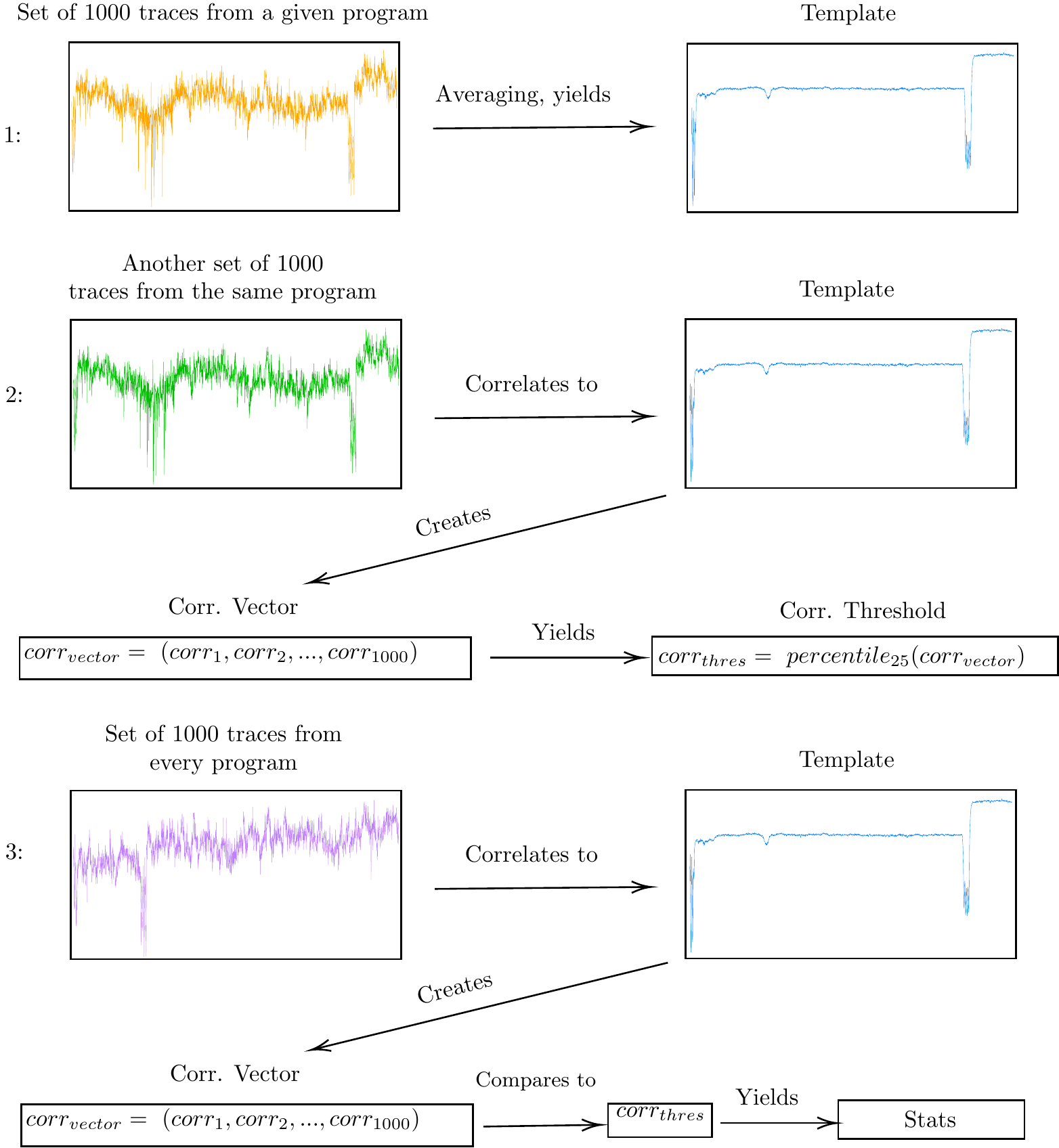}
\else
\Huge\bf PLACE\\HOLDER
\fi
\caption{Process repeated with every template to obtain statistics.}
\label{fig:explanation}
\end{figure}

We present the complete statistics in \autoref{table:CompleteStats}.
It includes the correlation threshold value for each program,
the name of the program that yields the highest FP rate, the absolute number
out of the 1000 traces that provides a correlation value higher than the threshold (leading to FP),
recall, precision, and F1 score metrics. From \autoref{table:CompleteStats}, we can
see that out of 47 programs, 35 have a precision above 0.90 and 42 above 0.80,
while 27 programs have a recall above 0.70 and 44 above 0.60. Thus, our method
correctly distinguishes positive results among the total number of traces, having in
general a low FPR, and leading to excellent precision values. On the contrary,
the recall results are lackluster, meaning that a significant portion of correct results
are not retrieved by our system, leading to an improvable FN rate (FNR).

Another concern is the fact that out of 1000 traces,
in the worst FP case scenario from a specific benchmark, 82 traces from ndes program
are considered as correct results by the nettle-aes template, leading to a FPR of 8.2 \%
from that specific operation. This means that traces coming from a given operation are
likely useful to attest others. The consequences, at this point, are clear. To have said 8.2 \% of FP results, would lead
to an inadmissible number of incorrect traces passing the attestation process. This is especially worrying, taking
into account that the recall (which is equivalent to TPR, the TP rate) is 69 \%.
With these results, each time we want to use this attestation process, we have roughly
a 3/10 probability of failing even if the computations are carried out properly, and
around a 1/12 probability of having a correct result in cases where the retrieved trace matches
with the template, even if it does not belong to the correct program. This is far from
the aimed numbers to perform solid attestation using our proposal.

\subsection{Parameterization} \label{subsec:several_traces}
To overcome these deficiencies,
we can use several traces to perform the attestation. We aim to improve these numbers by
requiring that out of $n$ traces, a minimum number $x_{th}$ must be above the threshold
correlation value. \autoref{prop:several_traces} presents the general case,
but now we must assign discrete values to the various parameters.

For our case, and to obtain a good trade-off between the probabilities
of having FPs that pass the attestation, and having FNs that are not
able to pass even if they belong to the correct template,
we consider the following threshold. It is the midpoint of $p_{\alpha}$ and
$p_{\beta}$, recalling those parameters from \autoref{prop:several_traces}.
\[
x_{th},n\in\mathbb{N}:x_{th}=\left\lceil n \cdot \frac{p_{\alpha}+p_{\beta}}{2}\right\rceil\wedge p_{\alpha}<\frac{x_{th}}{n}<p_{\beta}
\]
It is important to notice that each user can freely select this threshold number of traces by, for example,
giving different weights to $p_{\alpha}$ and $p_{\beta}$ or selecting a totally different relation. As previously
indicated, we select this threshold to have a good trade-off between an attacker trying to cheat the attestation
process with a trace coming from another operation and an honest user.

Moreover, the user determines the security level by selecting an appropriate
number of total traces $n$. Here, we understand the security level as a measure
of the probability of passing the attestation protocol using traces coming from a different
operation, given the number of traces required to be above the correlation threshold and
the total number of traces $n$.
Larger values of $n$ tend to minimize the FP probabilities and
increase the corresponding TP probabilities (thus, reducing also FNs). For our worst FP
scenario coming from a specific benchmark, we have 82 traces out of 1000 coming
from the ndes benchmark that yield FP results using the nettle-aes template. On the other
hand, the TPR for nettle-aes is 0.69. In this case, we can identify
the FPR from the ndes traces, with an attacker's probability to successfully cheat the attestation process of the
nettle-aes operation, thus, $p_{\alpha}=0.082$.
Analogously, the TPR for
nettle-aes is the success probability of an honest user,
thus $p_{\beta}=0.69$. Iterating, we found that for $n=243$ traces, the threshold
needed to pass the attestation is:
\[
x_{th}=\left\lceil243 \cdot \frac{0.69+0.082}{2}\right\rceil=94
\]
Combining our results for the obtained threshold with the binomial distribution (\autoref{eq:binomial})
for $n = 243$ and $p_{\alpha} = 0.082$, we get that the probability to cheat the attestation is:
\[
P(\alpha)=P(x\geq94)=P(94)+P(95)+...+P(243) \text{ using } p_{\alpha}=0.082
\]
\[
P(\alpha)=3.72\cdot10^{-39}\approx\frac{1}{2^{128}}=2.94\cdot10^{-39}
\]
which is equivalent to a 128-bit security level. Using only 243 traces to complete the
attestation process is a very promising result,
especially since this is only a proof-of-concept of our novel attestation approach.

Considering the TPR, for $n=243$ traces and $p_{\beta}= 0.69$, we get that:
\[
P(\beta)=P(x\geq94)=P(94)+P(95)+...+P(243) \text{ using } p_{\beta}=0.69
\]
\[
P(\beta)=1-6.27\cdot10^{-23}\approx 1 - \frac{1}{2^{74}}
\]
Thus, we achieve an (almost) perfect TPR with the selected number of traces. It is important to notice that these 243 traces
are needed in our worst case scenario. Concretely, our worst case scenario is that in which a higher number of traces
coming from a different operation passes the attestation process of another binary.
For the rest of binaries, the security level achieved will be equal or better than this one.
\autoref {tab:security-level} show similar results about the number $n$ of traces and the threshold $x_{th}$ required to pass the attestation
with various security levels for this worst case scenario.

\begin{table}[t!]
\centering
\caption{Parameter variations to achieve different security levels.}
\label{tab:security-level}
\begin{tabular}{|c|c|c|c|c|}
\hline
\phantom{.}Security Level\phantom{.} & $n$ & $x_{th}$ & $P(\alpha)$                                    & $P(\beta)$                \\ \hline
32-bit  \vphantom{$0^{0^0}$} & 52  & 21       & $2.39 \cdot 10^{-10} \approx 2^{-32}$  & $1 - 5.43 \cdot 10^{-6} \approx 1 - 2^{-17}$  \\ \hline
64-bit  \vphantom{$0^{0^0}$} & 114 & 45       & $5.18 \cdot 10^{-20} \approx 2^{-64}$  & $1 - 2.22 \cdot 10^{-11} \approx 1 - 2^{-35}$ \\ \hline
128-bit \vphantom{$0^{0^0}$} & 243 & 94       & $3.72 \cdot 10^{-39} \approx 2^{-128}$ & $1 - 6.27 \cdot 10^{-23} \approx 1 - 2^{-74}$ \\ \hline
256-bit \vphantom{$0^{0^0}$} & \phantom{.}494\phantom{.}
                             & \phantom{.}191\phantom{.}
                             & \phantom{.}$9.83 \cdot 10^{-78} \approx 2^{-256}$\phantom{.}
                             & \phantom{.}$1-4.14 \cdot 10^{-44} \approx 1 - 2^{-144}$\phantom{.} \\ \hline
\end{tabular}
\end{table}

Ideally, we would achieve a certain security level by isolating the minimum number $n$ of traces
and the $x_{th}$ required to find a given $P(\alpha)$. However, this is not possible, since the inverse function
of the binomial cumulative distribution does not exist. In other words, there is not an analytical form to find
$n$ starting from a given $P(x_{th} \geq x)$. That is the reason why we use the iteration approach, through
readily-available numerical methods that are easily and fastly computed by any tool or programming language
considering the equation from the binomial distribution.

\section{Conclusion} \label{sec:conclusion}
The main contribution of this paper is the proposal
of a new method to verify the integrity of a SoC, by
natively capturing the side-channel leakage produced during
the execution of a given operation.
In our case, an
ADC present in an FPGA adjacent to the AP carrying out
the operation that aims to be verified is suitable to
measure the voltage fluctuations caused by the execution
itself. These voltage fluctuations are able to characterize the performed
operation and to distinguish it from other binaries.

This attestation method does not rely on the
request-to-response time, using instead the power signal
generated by the program execution, which provides
more detailed information about what its proper
behavior should be.
Additionally, our method does not require an external setup
with physical proximity to capture the side-channel vector, rather native components.
Thus, it implies no software or hardware overhead to the system, since
it simply internally captures a power trace while carrying out
executions in a normal operating mode.

To end, our attestation protocol completes the work.
It describes not only how our system can capture the power leakage
that allows us to characterize a given operation, but also how to
manage the resultant power trace to, realistically, verify the
integrity of an untrusted system.
This achieves our main goal: checking that an untrusted system is executing
programs honestly, without the presence of any malware.
As far as we are aware, our work is the first constructive application of
remote power analysis, identified as an open problem in \cite{DBLP:conf/IEEEares/Martinez-Rodriguez21}.

\Paragraph{Limitations and future work}
Our proof-of-concept work exhibits a variety of limitations that should be
addressed in future related studies.
A brief summary follows.
(i) TP rates are improvable, especially taking into account that our traces
are fairly noisy. Nonetheless, using several power traces we are able to
overwhelmingly detect honest users vs.\ attackers.
(ii) Substitution attacks are a real threat, in case the ADC resolution is not
sufficient to capture malicious modifications in the instructions from a binary.
Future work includes exploring attacker strategies to modify binaries to produce
power traces that pass attestation.
(iii) The matching between traces and templates are mainly based
in the duration of the executed operation, since we use the ending triggers
as a distinctive mark in the power traces. However, the fact of using a whole
power trace (vectorial data) instead of the classical request-to-response time
(scalar data) hardens proxy attacks, because an attacker does not only need to
solve a challenge in a given time, but to generate a power trace similar to the
template, which is a difficult challenge.
(iv) Consulting \autoref{fig:scheme}, a natural observation is that requiring a TEE
for dynamic attestation seems paradoxical, in the sense that the target binary
could simply be part of the TEE itself. However, a major goal of TEEs is
reducing the Trusted Computing Base (TCB); keeping the target binaries outside
the immediate TCB significantly narrows the attack surface.

\Paragraph{Acknowledgments}
(i) This project has received funding from the European Research Council (ERC) under
the European Union's Horizon 2020 research and innovation programme (grant
agreement No.\ 804476).
(ii) This project has received funding by the ASCLEPIOS: Advanced Secure Cloud
Encrypted Platform for Internationally Orchestrated Solutions in Healthcare
Project No.\ 826093 EU research project.
(iii) Supported in part by the Cybersecurity Research Award granted by
the Technology Innovation Institute (TII).
(iv) Supported in part by CSIC's i-LINK+ 2019 ``Advancing in cybersecurity technologies''
(Ref.\ LINKA20216).
(v) The first author was financially supported in part by HPY Research Foundation.
(vi) M.~C.~Mart\'inez-Rodr\'iguez holds a postdoc that is co-funded by European Social Fund (ESF) and the Andalusian government, through the Andalucia ESF Operational Programme 2014--2020.

\begin{table}
\footnotesize
\centering
\caption{Complete statistics for every BEEBS benchmark used.}
\label{table:CompleteStats}
\begin{tabular}{@{}|c|c|c|c|c|c|@{}}
\hline
\textbf{Benchmark}            &  \textbf{$corr_{thres}$} & \textbf{Max.\ FP} & \textbf{Precision} & \phantom{Z}\textbf{Recall}\phantom{Z} & \phantom{ZZ} \textbf{F1} \phantom{ZZ} \\
\hline
aha-compress         & 0.7248         & crc32 (23)               & 0.9617    & 0.7540 & 0.8453 \\
bs                   & 0.7096         & newlib-sqrt (37)         & 0.9135    & 0.7710 & 0.8362 \\
bubblesort           & 0.5312         & nbody (30)               & 0.8959    & 0.7490 & 0.8159 \\
cnt                  & 0.7112         & frac (3)                 & 0.9775    & 0.6950 & 0.8124 \\
cover                & 0.7264         & crc (35)                 & 0.9122    & 0.5820 & 0.7106 \\
crc32                & 0.7447         & aha-compress (9)         & 0.9783    & 0.5410 & 0.6967 \\
crc                  & 0.7724         & duff (4)                 & 0.9947    & 0.7570 & 0.8597 \\
ctl-stack            & 0.6795         & sqrt (50)                & 0.6018    & 0.8070 & 0.6895 \\
ctl-vector           & 0.7070          & ns (10)                  & 0.9868    & 0.8200 & 0.8957 \\
cubic                & 0.7232         & sqrt (1)                 & 0.9987    & 0.7890 & 0.8816 \\
dijkstra             & 0.5141         & nettle-des (1)           & 0.9986    & 0.7020 & 0.8244 \\
duff                 & 0.7318         & crc (44)                 & 0.9481    & 0.8040 & 0.8701 \\
fasta                & 0.4935         & bs (12)                  & 0.9374    & 0.7340 & 0.8233 \\
fibcall              & 0.6551         & newlib-log (73)          & 0.6986    & 0.7370 & 0.7173 \\
fir                  & 0.6268         & none                      & 1.0000    & 0.8010 & 0.8895 \\
frac                 & 0.7856         & crc32, newlib-sqrt (2)   & 0.9880    & 0.7420 & 0.8475 \\
huffbench            & 0.5608         & newlib-log (14)          & 0.8525    & 0.6360 & 0.7285 \\
janne\_complex        & 0.4192         & crc (19)                 & 0.8014    & 0.6860 & 0.7392 \\
jfdctint             & 0.7598         & nettle-des (48)          & 0.9360    & 0.7020 & 0.8023 \\
lcdnum               & 0.3827         & cnt (38)                 & 0.6283    & 0.7100 & 0.6667 \\
levenshtein          & 0.5909         & template (4)             & 0.9904    & 0.7250 & 0.8372 \\
matmult-float        & 0.7009         & none                     & 1.0000    & 0.6520 & 0.7893 \\
matmult-int          & 0.4645         & several operations (2)   & 0.9804    & 0.7520 & 0.8511 \\
mergesort            & 0.7296         & sglib-hashtable (44)     & 0.9058    & 0.6730 & 0.7722 \\
nbody                & 0.5610          & bubblesort (30)          & 0.8066    & 0.7130 & 0.7569 \\
ndes                 & 0.6663         & nettle-aes (80)          & 0.8638    & 0.6340 & 0.7313 \\
nettle-aes           & 0.6318         & ndes (82)                & 0.8903    & 0.6900 & 0.7775 \\
nettle-des           & 0.7444         & cover (21)               & 0.9613    & 0.7690 & 0.8545 \\
newlib-log           & 0.5017         & dijsktra (78)            & 0.6399    & 0.6290 & 0.6344 \\
newlib-sqrt          & 0.5191         & dijkstra (81)            & 0.4573    & 0.7170 & 0.5584 \\
ns                   & 0.7599         & none                     & 1.0000    & 0.7320 & 0.8453 \\
nsichneu             & 0.7869         & none                     & 1.0000    & 0.7210 & 0.8379 \\
picojpeg             & 0.6009         & st (13)                  & 0.9805    & 0.6540 & 0.7846 \\
qrduino              & 0.5639         & sglib-listsort (6)       & 0.9834    & 0.4740 & 0.6397 \\
rijndael             & 0.6470          & ndes (5)                 & 0.9927    & 0.6800 & 0.8071 \\
select               & 0.6120          & template (27)            & 0.9415    & 0.7880 & 0.8579 \\
sglib-arrayheapsort  & 0.6372         & bs (2)                   & 0.9912    & 0.6720 & 0.8010 \\
sglib-arrayquicksort & 0.7020          & mergesort (11)           & 0.9807    & 0.7110 & 0.8244 \\
sglib-hashtable      & 0.6919         & mergesort (77)           & 0.8894    & 0.6350 & 0.7410 \\
sglib-listinsertsort & 0.6251         & various (1)              & 0.9947    & 0.7540 & 0.8578 \\
sglib-listsort       & 0.5108         & qrduino (24)             & 0.9492    & 0.6540 & 0.7744 \\
sqrt                 & 0.5113         & rijndael (2)             & 0.9923    & 0.6470 & 0.7833 \\
st                   & 0.5812         & picojpeg (24)            & 0.9644    & 0.6770 & 0.7955 \\
stb\_perlin           & 0.6476         & template (3)             & 0.9833    & 0.6460 & 0.7797 \\
tarai                & 0.6927         & wikisort (37)            & 0.9025    & 0.7500 & 0.8192 \\
template             & 0.6654         & select (19)              & 0.9487    & 0.6840 & 0.7949 \\
wikisort             & 0.7092         & sglib-listinsertsort (9) & 0.9673    & 0.7090 & 0.8182 \\
\hline
\end{tabular}
\end{table}
 \bibliographystyle{splncsnat}

\begin{thebibliography}{33}
\providecommand{\natexlab}[1]{#1}
\providecommand{\url}[1]{\texttt{#1}}
\providecommand{\urlprefix}{}

\bibitem[{Abera et~al.(2016)Abera, Asokan, Davi, Ekberg, Nyman, Paverd,
  Sadeghi, and Tsudik}]{Abera2016}
Abera, T., Asokan, N., Davi, L., Ekberg, J., Nyman, T., Paverd, A., Sadeghi,
  A., Tsudik, G.: {C-FLAT}: Control-flow attestation for embedded systems
  software.
\newblock In: Weippl, E.R., Katzenbeisser, S., Kruegel, C., Myers, A.C.,
  Halevi, S. (eds.) Proceedings of the 2016 {ACM} {SIGSAC} Conference on
  Computer and Communications Security, Vienna, Austria, October 24-28, 2016.
  pp. 743--754. {ACM} (2016),
  \urlprefix\url{https://doi.org/10.1145/2976749.2978358}

\bibitem[{Castelluccia et~al.(2009)Castelluccia, Francillon, Perito, and
  Soriente}]{Castelluccia2009}
Castelluccia, C., Francillon, A., Perito, D., Soriente, C.: On the difficulty
  of software-based attestation of embedded devices.
\newblock In: Al{-}Shaer, E., Jha, S., Keromytis, A.D. (eds.) Proceedings of
  the 2009 {ACM} Conference on Computer and Communications Security, {CCS}
  2009, Chicago, Illinois, USA, November 9-13, 2009. pp. 400--409. {ACM}
  (2009), \urlprefix\url{https://doi.org/10.1145/1653662.1653711}

\bibitem[{Chen et~al.(2017)Chen, Dong, Bai, Jauhar, and Cheng}]{Chen2017}
Chen, B., Dong, X., Bai, G., Jauhar, S., Cheng, Y.: Secure and efficient
  software-based attestation for industrial control devices with {ARM}
  processors.
\newblock In: Proceedings of the 33rd Annual Computer Security Applications
  Conference, Orlando, FL, USA, December 4-8, 2017. pp. 425--436. {ACM} (2017),
  \urlprefix\url{https://doi.org/10.1145/3134600.3134621}

\bibitem[{de~Clercq et~al.(2016)de~Clercq, Keulenaer, Coppens, Yang, Maene,
  Bosschere, Preneel, Sutter, and Verbauwhede}]{Clercq2016}
de~Clercq, R., Keulenaer, R.D., Coppens, B., Yang, B., Maene, P., Bosschere,
  K.D., Preneel, B., Sutter, B.D., Verbauwhede, I.: {SOFIA}: Software and
  control flow integrity architecture.
\newblock In: Fanucci, L., Teich, J. (eds.) 2016 Design, Automation {\&} Test
  in Europe Conference {\&} Exhibition, {DATE} 2016, Dresden, Germany, March
  14-18, 2016. pp. 1172--1177. {IEEE} (2016),
  \urlprefix\url{http://ieeexplore.ieee.org/document/7459489/}

\bibitem[{Coker et~al.(2011)Coker, Guttman, Loscocco, Herzog, Millen, O'Hanlon,
  Ramsdell, Segall, Sheehy, and
  Sniffen}]{DBLP:journals/ijisec/CokerGLHMORSSS11}
Coker, G., Guttman, J.D., Loscocco, P., Herzog, A.L., Millen, J.K., O'Hanlon,
  B., Ramsdell, J.D., Segall, A., Sheehy, J., Sniffen, B.T.: Principles of
  remote attestation.
\newblock Int. J. Inf. Sec. 10(2), 63--81 (2011),
  \urlprefix\url{https://doi.org/10.1007/s10207-011-0124-7}

\bibitem[{Dessouky et~al.(2017)Dessouky, Zeitouni, Nyman, Paverd, Davi,
  Koeberl, Asokan, and Sadeghi}]{Dessouky2017}
Dessouky, G., Zeitouni, S., Nyman, T., Paverd, A., Davi, L., Koeberl, P.,
  Asokan, N., Sadeghi, A.: {LO-FAT}: Low-overhead control flow attestation in
  hardware.
\newblock In: Proceedings of the 54th Annual Design Automation Conference,
  {DAC} 2017, Austin, TX, USA, June 18-22, 2017. pp. 24:1--24:6. {ACM} (2017),
  \urlprefix\url{https://doi.org/10.1145/3061639.3062276}

\bibitem[{Dolev and Yao(1983)}]{dolev:1983}
Dolev, D., Yao, A.C.: On the security of public key protocols.
\newblock {IEEE} Trans. Inf. Theory 29(2), 198--207 (1983),
  \urlprefix\url{https://doi.org/10.1109/TIT.1983.1056650}

\bibitem[{Gnad et~al.(2019)Gnad, Krautter, and
  Tahoori}]{DBLP:journals/tches/GnadKT19}
Gnad, D.R.E., Krautter, J., Tahoori, M.B.: Leaky noise: New side-channel attack
  vectors in mixed-signal {IoT} devices.
\newblock {IACR} Trans. Cryptogr. Hardw. Embed. Syst. 2019(3), 305--339 (2019),
  \urlprefix\url{https://doi.org/10.13154/tches.v2019.i3.305-339}

\bibitem[{Gnad et~al.(2020)Gnad, Krautter, Tahoori, Schellenberg, and
  Moradi}]{DBLP:journals/dt/GnadKTSM20}
Gnad, D.R.E., Krautter, J., Tahoori, M.B., Schellenberg, F., Moradi, A.: Remote
  electrical-level security threats to multi-tenant {FPGAs}.
\newblock {IEEE} Des. Test 37(2), 111--119 (2020),
  \urlprefix\url{https://doi.org/10.1109/MDAT.2020.2968248}

\bibitem[{Goldwasser et~al.(1988)Goldwasser, Micali, and
  Rivest}]{DBLP:journals/siamcomp/GoldwasserMR88}
Goldwasser, S., Micali, S., Rivest, R.L.: A digital signature scheme secure
  against adaptive chosen-message attacks.
\newblock {SIAM} J. Comput. 17(2), 281--308 (1988),
  \urlprefix\url{https://doi.org/10.1137/0217017}

\bibitem[{Gravellier et~al.(2019{\natexlab{a}})Gravellier, Dutertre, Teglia,
  and Loubet{-}Moundi}]{DBLP:conf/reconfig/GravellierDTL19}
Gravellier, J., Dutertre, J., Teglia, Y., Loubet{-}Moundi, P.: High-speed ring
  oscillator based sensors for remote side-channel attacks on {FPGAs}.
\newblock In: Andrews, D., Cumplido, R., Feregrino, C., Platzner, M. (eds.)
  2019 International Conference on ReConFigurable Computing and FPGAs, ReConFig
  2019, Cancun, Mexico, December 9-11, 2019. pp. 1--8. {IEEE}
  (2019{\natexlab{a}}),
  \urlprefix\url{https://doi.org/10.1109/ReConFig48160.2019.8994789}

\bibitem[{Gravellier et~al.(2019{\natexlab{b}})Gravellier, Dutertre, Teglia,
  Loubet{-}Moundi, and Olivier}]{DBLP:conf/cardis/GravellierDTLO19}
Gravellier, J., Dutertre, J., Teglia, Y., Loubet{-}Moundi, P., Olivier, F.:
  Remote side-channel attacks on heterogeneous {SoC}.
\newblock In: Bela{\"{i}}d, S., G{\"{u}}neysu, T. (eds.) Smart Card Research
  and Advanced Applications - 18th International Conference, {CARDIS} 2019,
  Prague, Czech Republic, November 11-13, 2019, Revised Selected Papers.
  Lecture Notes in Computer Science, vol. 11833, pp. 109--125. Springer
  (2019{\natexlab{b}}),
  \urlprefix\url{https://doi.org/10.1007/978-3-030-42068-0_7}

\bibitem[{Han et~al.(2017)Han, Etigowni, Liu, Zonouz, and Petropulu}]{Han2017}
Han, Y., Etigowni, S., Liu, H., Zonouz, S.A., Petropulu, A.P.: Watch me, but
  don't touch me! contactless control flow monitoring via electromagnetic
  emanations.
\newblock In: Thuraisingham, B.M., Evans, D., Malkin, T., Xu, D. (eds.)
  Proceedings of the 2017 {ACM} {SIGSAC} Conference on Computer and
  Communications Security, {CCS} 2017, Dallas, TX, USA, October 30 - November
  03, 2017. pp. 1095--1108. {ACM} (2017),
  \urlprefix\url{https://doi.org/10.1145/3133956.3134081}

\bibitem[{Lee et~al.(2020)Lee, Kohlbrenner, Shinde, Asanovic, and
  Song}]{DBLP:conf/eurosys/LeeKSAS20}
Lee, D., Kohlbrenner, D., Shinde, S., Asanovic, K., Song, D.: Keystone: an open
  framework for architecting trusted execution environments.
\newblock In: Bilas, A., Magoutis, K., Markatos, E.P., Kostic, D., Seltzer, M.
  (eds.) EuroSys '20: Fifteenth EuroSys Conference 2020, Heraklion, Greece,
  April 27-30, 2020. pp. 38:1--38:16. {ACM} (2020),
  \urlprefix\url{https://doi.org/10.1145/3342195.3387532}

\bibitem[{Li et~al.(2011)Li, McCune, and Perrig}]{Li2011}
Li, Y., McCune, J.M., Perrig, A.: {VIPER}: verifying the integrity of
  peripherals' firmware.
\newblock In: Chen, Y., Danezis, G., Shmatikov, V. (eds.) Proceedings of the
  18th {ACM} Conference on Computer and Communications Security, {CCS} 2011,
  Chicago, Illinois, USA, October 17-21, 2011. pp. 3--16. {ACM} (2011),
  \urlprefix\url{https://doi.org/10.1145/2046707.2046711}

\bibitem[{Lipp et~al.(2021)Lipp, Kogler, Oswald, Schwarz, Easdon, Canella, and
  Gruss}]{Lipp2020}
Lipp, M., Kogler, A., Oswald, D., Schwarz, M., Easdon, C., Canella, C., Gruss,
  D.: {PLATYPUS}: Software-based power side-channel attacks on x86.
\newblock In: 2021 {IEEE} Symposium on Security and Privacy, {SP} 2021,
  Proceedings, 23-27 May 2021, Virtual Event. pp. 1080--1096. {IEEE} Computer
  Society (2021),
  \urlprefix\url{https://doi.ieeecomputersociety.org/10.1109/SP40001.2021.00063}

\bibitem[{Lisovets et~al.(2021)Lisovets, Knichel, Moos, and
  Moradi}]{DBLP:journals/tches/LisovetsKMM21}
Lisovets, O., Knichel, D., Moos, T., Moradi, A.: Let's take it offline:
  Boosting brute-force attacks on {iPhone's} user authentication through {SCA}.
\newblock {IACR} Trans. Cryptogr. Hardw. Embed. Syst. 2021(3), 496--519 (2021),
  \urlprefix\url{https://doi.org/10.46586/tches.v2021.i3.496-519}

\bibitem[{Liu and Vasserman(2017)}]{Liu2018}
Liu, H., Vasserman, E.Y.: Gray-box software integrity checking via
  side-channels.
\newblock In: Lin, X., Ghorbani, A., Ren, K., Zhu, S., Zhang, A. (eds.)
  Security and Privacy in Communication Networks - 13th International
  Conference, SecureComm 2017, Niagara Falls, ON, Canada, October 22-25, 2017,
  Proceedings. Lecture Notes of the Institute for Computer Sciences, Social
  Informatics and Telecommunications Engineering, vol. 238, pp. 3--23. Springer
  (2017), \urlprefix\url{https://doi.org/10.1007/978-3-319-78813-5_1}

\bibitem[{Mart{\'{i}}nez{-}Rodr{\'{i}}guez
  et~al.(2021)Mart{\'{i}}nez{-}Rodr{\'{i}}guez, Delgado{-}Lozano, and
  Brumley}]{DBLP:conf/IEEEares/Martinez-Rodriguez21}
Mart{\'{i}}nez{-}Rodr{\'{i}}guez, M.C., Delgado{-}Lozano, I.M., Brumley, B.B.:
  {SoK}: Remote power analysis.
\newblock In: Reinhardt, D., M{\"{u}}ller, T. (eds.) {ARES} 2021: The 16th
  International Conference on Availability, Reliability and Security, Vienna,
  Austria, August 17-20, 2021. pp. 7:1--7:12. {ACM} (2021),
  \urlprefix\url{https://doi.org/10.1145/3465481.3465773}

\bibitem[{Msgna et~al.(2014)Msgna, Markantonakis, Naccache, and
  Mayes}]{Msgna2014}
Msgna, M., Markantonakis, K., Naccache, D., Mayes, K.: Verifying software
  integrity in embedded systems: A side channel approach.
\newblock In: Prouff, E. (ed.) Constructive Side-Channel Analysis and Secure
  Design - 5th International Workshop, {COSADE} 2014, Paris, France, April
  13-15, 2014. Revised Selected Papers. Lecture Notes in Computer Science, vol.
  8622, pp. 261--280. Springer (2014),
  \urlprefix\url{https://doi.org/10.1007/978-3-319-10175-0_18}

\bibitem[{Nazari et~al.(2017)Nazari, Sehatbakhsh, Alam, Zajic, and
  Prvulovic}]{Nazari2017}
Nazari, A., Sehatbakhsh, N., Alam, M., Zajic, A.G., Prvulovic, M.: {EDDIE}:
  {EM}-based detection of deviations in program execution.
\newblock In: Proceedings of the 44th Annual International Symposium on
  Computer Architecture, {ISCA} 2017, Toronto, ON, Canada, June 24-28, 2017.
  pp. 333--346. {ACM} (2017),
  \urlprefix\url{https://doi.org/10.1145/3079856.3080223}

\bibitem[{O'Flynn and Dewar(2019)}]{DBLP:journals/tches/OFlynnD19}
O'Flynn, C., Dewar, A.: On-device power analysis across hardware security
  domains.
\newblock {IACR} Trans. Cryptogr. Hardw. Embed. Syst. 2019(4), 126--153 (2019),
  \urlprefix\url{https://doi.org/10.13154/tches.v2019.i4.126-153}

\bibitem[{Pallister et~al.(2013)Pallister, Hollis, and
  Bennett}]{DBLP:journals/corr/PallisterHB13}
Pallister, J., Hollis, S.J., Bennett, J.: {BEEBS}: Open benchmarks for energy
  measurements on embedded platforms.
\newblock CoRR abs/1308.5174 (2013),
  \urlprefix\url{http://arxiv.org/abs/1308.5174}

\bibitem[{Pallister et~al.(2015)Pallister, Hollis, and
  Bennett}]{DBLP:journals/cj/PallisterHB15}
Pallister, J., Hollis, S.J., Bennett, J.: Identifying compiler options to
  minimize energy consumption for embedded platforms.
\newblock Comput. J. 58(1), 95--109 (2015),
  \urlprefix\url{https://doi.org/10.1093/comjnl/bxt129}

\bibitem[{Ramesh et~al.(2018)Ramesh, Patil, Dhanuskodi, Provelengios,
  Pillement, Holcomb, and Tessier}]{DBLP:conf/fccm/RameshPDPPHT18}
Ramesh, C., Patil, S.B., Dhanuskodi, S.N., Provelengios, G., Pillement, S.,
  Holcomb, D.E., Tessier, R.: {FPGA} side channel attacks without physical
  access.
\newblock In: 26th {IEEE} Annual International Symposium on Field-Programmable
  Custom Computing Machines, {FCCM} 2018, Boulder, CO, USA, April 29 - May 1,
  2018. pp. 45--52. {IEEE} Computer Society (2018),
  \urlprefix\url{https://doi.org/10.1109/FCCM.2018.00016}

\bibitem[{Schellenberg et~al.(2018{\natexlab{a}})Schellenberg, Gnad, Moradi,
  and Tahoori}]{DBLP:conf/date/SchellenbergG0T18}
Schellenberg, F., Gnad, D.R.E., Moradi, A., Tahoori, M.B.: An inside job:
  Remote power analysis attacks on {FPGAs}.
\newblock In: Madsen, J., Coskun, A.K. (eds.) 2018 Design, Automation {\&} Test
  in Europe Conference {\&} Exhibition, {DATE} 2018, Dresden, Germany, March
  19-23, 2018. pp. 1111--1116. {IEEE} (2018{\natexlab{a}}),
  \urlprefix\url{https://doi.org/10.23919/DATE.2018.8342177}

\bibitem[{Schellenberg et~al.(2018{\natexlab{b}})Schellenberg, Gnad, Moradi,
  and Tahoori}]{DBLP:conf/iccad/SchellenbergG0T18}
Schellenberg, F., Gnad, D.R.E., Moradi, A., Tahoori, M.B.: Remote inter-chip
  power analysis side-channel attacks at board-level.
\newblock In: Bahar, I. (ed.) Proceedings of the International Conference on
  Computer-Aided Design, {ICCAD} 2018, San Diego, CA, USA, November 05-08,
  2018. p. 114. {ACM} (2018{\natexlab{b}}),
  \urlprefix\url{https://doi.org/10.1145/3240765.3240841}

\bibitem[{Sehatbakhsh et~al.(2019)Sehatbakhsh, Nazari, Khan, Zajic, and
  Prvulovic}]{DBLP:conf/micro/SehatbakhshNKZP19}
Sehatbakhsh, N., Nazari, A., Khan, H.A., Zajic, A.G., Prvulovic, M.: {EMMA}:
  Hardware/software attestation framework for embedded systems using
  electromagnetic signals.
\newblock In: Proceedings of the 52nd Annual {IEEE/ACM} International Symposium
  on Microarchitecture, {MICRO} 2019, Columbus, OH, USA, October 12-16, 2019.
  pp. 983--995. {ACM} (2019),
  \urlprefix\url{https://doi.org/10.1145/3352460.3358261}

\bibitem[{Seshadri et~al.(2007)Seshadri, Luk, Perrig, van Doom, and
  Khosla}]{Seshadri2005}
Seshadri, A., Luk, M., Perrig, A., van Doom, L., Khosla, P.K.: Pioneer:
  Verifying code integrity and enforcing untampered code execution on legacy
  systems.
\newblock In: Christodorescu, M., Jha, S., Maughan, D., Song, D., Wang, C.
  (eds.) Malware Detection, Advances in Information Security, vol.~27, pp.
  253--289. Springer (2007),
  \urlprefix\url{https://doi.org/10.1007/978-0-387-44599-1_12}

\bibitem[{Seshadri et~al.(2006)Seshadri, Luk, Perrig, van Doorn, and
  Khosla}]{Seshadri2006}
Seshadri, A., Luk, M., Perrig, A., van Doorn, L., Khosla, P.K.: {SCUBA}: Secure
  code update by attestation in sensor networks.
\newblock In: Poovendran, R., Juels, A. (eds.) Proceedings of the 2006 {ACM}
  Workshop on Wireless Security, Los Angeles, California, USA, September 29,
  2006. pp. 85--94. {ACM} (2006),
  \urlprefix\url{https://doi.org/10.1145/1161289.1161306}

\bibitem[{Seshadri et~al.(2004)Seshadri, Perrig, van Doorn, and
  Khosla}]{Seshadri2004}
Seshadri, A., Perrig, A., van Doorn, L., Khosla, P.K.: {SWATT}: Software-based
  attestation for embedded devices.
\newblock In: 2004 {IEEE} Symposium on Security and Privacy (S{\&}P 2004), 9-12
  May 2004, Berkeley, CA, {USA}. p. 272. {IEEE} Computer Society (2004),
  \urlprefix\url{https://doi.org/10.1109/SECPRI.2004.1301329}

\bibitem[{Yang et~al.(2019)Yang, Alaql, Hoque, and Bhunia}]{Yang2019}
Yang, S., Alaql, A., Hoque, T., Bhunia, S.: Runtime integrity verification in
  cyber-physical systems using side-channel fingerprint.
\newblock In: {IEEE} International Conference on Consumer Electronics, {ICCE}
  2019, Las Vegas, NV, USA, January 11-13, 2019. pp. 1--6. {IEEE} (2019),
  \urlprefix\url{https://doi.org/10.1109/ICCE.2019.8662071}

\bibitem[{Zhao and Suh(2018)}]{DBLP:conf/sp/ZhaoS18}
Zhao, M., Suh, G.E.: {FPGA}-based remote power side-channel attacks.
\newblock In: 2018 {IEEE} Symposium on Security and Privacy, {SP} 2018,
  Proceedings, 21-23 May 2018, San Francisco, California, {USA}. pp. 229--244.
  {IEEE} Computer Society (2018),
  \urlprefix\url{https://doi.org/10.1109/SP.2018.00049}

\end{thebibliography}

\end{document}